\documentclass[a4paper,11pt,fleqn]{article}

\usepackage[T1]{fontenc} 
\usepackage[utf8]{inputenc} 
\usepackage{lmodern} 
\usepackage{microtype} 
\usepackage[margin=2.5cm]{geometry}
\usepackage{ellipsis} 
\usepackage{cite} 

\usepackage{authblk}
\usepackage{amsmath} 
\usepackage{amssymb}
\usepackage{dsfont}
\usepackage[colorlinks,allcolors=blue]{hyperref}
\usepackage[nameinlink]{cleveref}
\usepackage{tikz}
\usepackage[font=footnotesize,labelsep=period,labelfont=bf,margin=3em]{caption}
\usepackage{amsmath,amssymb}
\usepackage[amsthm,thmmarks]{ntheorem}

\clubpenalty=10000
\widowpenalty=10000

\newtheorem{theorem}{Theorem}
\newtheorem{definition}{Definition}

\newtheorem{lemma}{Lemma}

\newtheorem{remark}{Remark}

\crefname{chapter}{Chapter}{Chapters}
\crefname{section}{Section}{Sections}
\crefname{subsection}{Subsection}{Subsections}
\crefname{definition}{Definition}{Definitions}
\crefname{example}{Example}{Examples}
\crefname{figure}{Figure}{Figures}
\crefname{table}{Table}{Tables}
\crefname{theorem}{Theorem}{Theorems}
\crefname{lemma}{Lemma}{Lemmata}
\crefname{remark}{Remark}{Remarks}
\crefname{corollary}{Corollary}{Corollary}
\crefname{observation}{Observation}{Observations}
\crefname{fact}{Fact}{Facts}
\crefname{conjecture}{Conjecture}{Conjectures}
\crefname{equation}{}{}
\crefname{enumi}{}{}

\title{\textbf{Forbidden Subgraph Problems with Predictions}}

\author{Hans-Joachim Böckenhauer}
\author{Melvin Jahn}
\author{Dennis Komm}
\author{Moritz Stocker}

\affil{Department of Computer Science, ETH Zurich\\%
  \small\texttt{\{hjb, dennis.komm, moritz.stocker\}@inf.ethz.ch, mejahn@student.ethz.ch}}

\date{\today}

\pagestyle{plain}

\begin{document}

\maketitle

\makeatletter
\def\moverlay{\mathpalette\mov@rlay}
\def\mov@rlay#1#2{\leavevmode\vtop{%
   \baselineskip\z@skip \lineskiplimit-\maxdimen
   \ialign{\hfil$\m@th#1##$\hfil\cr#2\crcr}}}
\newcommand{\charfusion}[3][\mathord]{
    #1{\ifx#1\mathop\vphantom{#2}\fi
        \mathpalette\mov@rlay{#2\cr#3}
      }
    \ifx#1\mathop\expandafter\displaylimits\fi}
\makeatother

\newcommand{\cupdot}{\charfusion[\mathbin]{\cup}{\cdot}}
\newcommand{\bigcupdot}{\charfusion[\mathop]{\bigcup}{\boldsymbol\cdot}}

\begin{abstract}
\noindent In the \textsc{Online Delayed Connected $H$-Node-Deletion Problem},
an unweighted graph is revealed vertex by vertex and it must remain free
of any induced copies of a specific connected induced forbidden subgraph $H$ at
each point in time. To achieve this, an algorithm must, upon each occurrence of
$H$, identify and irrevocably delete one or more vertices. The objective is to
delete as few vertices as possible. We provide tight bounds on the competitive
ratio for forbidden subgraphs $H$ that do not contain two true twins or that do
not contain two false twins.

We further consider the problem within the model of predictions, where the
algorithm is provided with a single bit of advice for each revealed vertex.
These predictions are considered to be provided by an untrusted source and may
be incorrect. We present a family of algorithms solving the \textsc{Online
Delayed Connected $H$-Node-Deletion Problem} with predictions and show that
it is Pareto-optimal with respect to competitivity and robustness for the
online vertex cover problem for 2-connected forbidden subgraphs that do not
contain two true twins or that do not contain two false twins, as well as for
forbidden paths of length greater than four. We also propose subgraphs for
which a better algorithm might exist.
\end{abstract}

\section{Introduction} \label{chap:intro} 
Online algorithms receive their input piece by piece as a sequence of \emph{requests}
and have to make irrevocable decisions, called \emph{answers} upon each such request. As a result, they
face a significant disadvantage compared to traditional algorithms, as they
must act on portions of the instance without having complete information about
it. 

The performance of an online algorithm on an optimization problem is
classically measured by comparing its cost to the optimal offline solution,
which is the best possible solution given complete knowledge of the input in
advance. The \textit{competitive ratio} of an online algorithm is defined as
the worst-case ratio of its cost on an instance compared to the cost of the
optimal offline solution~\cite{borodin1998online}. 
Some models attempt to further analyze the online nature of a problem by asking
how much additional information (so-called \emph{advice})
is needed to improve the competitive ratio of an online algorithm.
\textit{Advice complexity} was first studied by
Dobrev et al.~\cite{dobrev2009measuring} and further refined
by Böckenhauer et al.\ \cite{bockenhauer2009advice}, Hromkovi\v{c} et al.~\cite{hromkovivc2010information},
and Emek et al.~\cite{emek2009online,emek2011online}.
A natural trade-off arises between the size of the advice and the performance of the
algorithm utilizing this information. Of course, if the advice is sufficiently large to encode the
entire optimal offline solution, the algorithm can typically obtain a
competitive ratio of $1$. There is a broad survey on algorithms with advice by
Boyar et al.~\cite{boyar2017online}. For a complete and rigorous introduction
to online problems we refer to the book by Komm \cite{komm2016introduction}. 

In a \emph{node-deletion problem} on a graph, the objective is to delete a minimal
number of vertices such that a graph satisfies a certain property. Many graph
problems are (or can be viewed as) node-deletion problems. The corresponding
properties are, for example, cycle-free graphs (\textsc{Feedback Vertex Set}),
edge-free graphs (\textsc{Vertex Cover}), or complete graphs
(\textsc{Max-Clique}).  Yannakakis~\cite{yannakakis1978node} showed that this
kind of problem is NP-complete for general, non-trivial hereditary graph
properties. The advice complexity of  the corresponding online problems was
then studied by Komm et al.\ \cite{komm2016advice} using results of Boyar et
al.\ \cite{boyar2015advice}. In this paper, we study deterministic algorithms
for online induced forbidden subgraph problems. The input graph is revealed
vertex by vertex, along with its corresponding induced edges, and an
algorithm has to keep the graph free of any induced subgraph isomorphic to a
fixed forbidden subgraph $H$ by deleting vertices. It seems only natural that
such an algorithm does not have to decide immediately if a revealed vertex
should be deleted or not but only has to decide which vertices to delete once
an induced copy of $H$ appears in the online graph. This approach is referred
to as the \textit{delayed decision model}, introduced by Chen et al.~\cite{chen2021online} and based on the \emph{preemptive model} used by Komm et al.~\cite{komm2016advice}.
The problem itself is known as the
\textsc{Delayed $H$-Node-Deletion Problem} and was studied by
Chen et al.~\cite{chen2021online}, and Berndt and Lotze \cite{berndt2023advice} together
with other vertex and edge deletion problems with respect to their advice
complexity. 

In this paper, we expand an idea of Chen et al.\ for algorithms without advice
to give tight bounds on the competitive ratio in the case where the connected
forbidden subgraph $H$ does not contain a pair of true twins, or does not
contain a pair of false twins.
Note that this problem does not match the classical model where an immediate
decision whether or not to delete is required after receiving each vertex. In
fact, Chen et al.\ showed that the problem does not admit
any competitive algorithm under the classical model. 

We further use advice of the form described by Emek et al.~\cite{emek2011online}
where the online algorithm is augmented by a sequence of
advice queries $u_t$ with $t = 1,2,\dots$. The query $u_t$ maps the whole
request sequence $\sigma$ to an advice $u_t(\sigma)$ of fixed size. At the
$t$-th request, the algorithm is provided with advice $u_t(\sigma)$. This model
differs from other advice models since it does not reveal the whole advice
immediately to the online algorithm but only parts of the advice together with
each request. Nevertheless, the advice oracle knows the whole input and can
optimally design advice for the algorithm accordingly. Specifically, in our
model, each time a new vertex is revealed, the algorithm has access to one
additional bit of advice that indicates whether the vertex is part of a fixed
optimal solution or not.

In traditional models analyzing advice complexity, the oracle is always correct
and infallible. In practice however, the advice will be computed under some
assumptions that are not completely reliable (e.g., by a machine learning algorithm).
If an algorithm blindly trusts the advice and it turns out
that it contains errors, the consequences
for the performance of the algorithm can be dire. Therefore, it makes sense
that an algorithm should be robust against errors in the advice. This was
studied by Lykouris and Vassilvitskii~\cite{lykouris2018competitive,lykouris2021competitive}, and
Purohit et al.~\cite{purohit2018improving} as the \emph{prediction model} or model
of \emph{machine-learned advice}. The concept was further generalized and applied to
several well-known online problems by
Angelopoulos et al.~\cite{angelopoulos2020online,angelopoulos2024online} as the model of untrusted advice.

Under this model, an algorithm should fulfill two requirements. If the advice
turns out to be correct, the algorithm should perform close to the optimal
solution. We define the \emph{consistency} $r_{\textsc{Alg}}$ of an algorithm $\textsc{Alg}$ as its
competitive ratio achieved with the best possible advice that is correct and of
the expected form. Nevertheless, if the advice is incorrect or even maliciously
designed by an adversary, the performance of the algorithm should not be
compromised too much. The algorithm should hence be robust against incorrect
advice. We define the \emph{robustness} $w_{\textsc{Alg}}$ to be the competitive ratio of
the algorithm $\textsc{Alg}$ under worst-case advice. Therefore, the performance of
algorithm $\textsc{Alg}$ working with predictions (``untrusted advice''), can be expressed as a
two-dimensional point $(r_{\textsc{Alg}},w_{\textsc{Alg}})$. An algorithm $\textsc{Alg}_1$ dominates an algorithm
$\textsc{Alg}_2$ if $r_{\textsc{Alg}_1} \le r_{\textsc{Alg}_2}$ and $w_{\textsc{Alg}_1} \le w_{\textsc{Alg}_2}$. For complicated
online problems, there might not exist a single online algorithm that
dominates all others and two algorithms can generally be incomparable. Thus, we
search for a \textit{Pareto-optimal} family of algorithms $\mathcal{A}$, i.e.,
a family of pairwise incomparable algorithms $\mathcal{A}$ such that, for every
algorithm $\textsc{Alg}'$, there exists an algorithm $\textsc{Alg} \in \mathcal{A}$ that dominates
$\textsc{Alg}'$. 

We give a family of algorithms $\textsc{Alg}_p$ with a suitable parameter $p$ that solves
the \textsc{Delayed $H$-Node-Deletion Problem} with predictions and prove
that it is Pareto-optimal for certain forbidden subgraphs $H$ when limiting the
predictions to the model we described above. Specifically, our results apply to
connected forbidden subgraphs that are 2-vertex connected but do not contain
two true twins or do not contain two false twins, as well as for forbidden
paths of a fixed size greater than four or equal to two. We also investigate
the competitive ratios of algorithms solving the
\textsc{Delayed $H$-Node-Deletion Problem} without advice. This can be motivated by the fact that
an online algorithm cannot possibly perform better on incorrect, adversarially
chosen predictions than an online algorithm that chooses to ignore the predictions or does not receive any.
A na\"{\i}ve algorithm can just delete all vertices of the induced copy of a
forbidden induced subgraph whenever it appears. We show that this strategy is
optimal with respect to the competitive ratio for connected forbidden subgraphs
that do not contain two true twins or do not contain two false twins. This
includes common subgraphs such as cliques and induced cycles, stars, or paths.
Furthermore, we propose a forbidden subgraph for which a better algorithm might
exist.

\section{Preliminaries}\label{chap:prelim}

We use standard graph notation and we consider simple and undirected
graphs only. For a given graph $G=(V,E)$, $|G|$ denotes the number of
vertices (or nodes) $|V(G)|$; $C_k$ denotes the cycle, $P_k$ the path, and
$K_k$ the complete graph consisting of $k$ vertices.
For a subset $S\subseteq V$, we define $G-S$ to be the graph $G[V\setminus S]$
induced by the deletion of all vertices $v \in S$. 

A graph $G$ is $H$-free if there is no induced copy of the subgraph $H$ in
$G$, i.e., there exists no induced subgraph isomorphic to $H$ in $G$. The
\emph{open neighborhood} $N(v)$ of vertex $v$ is the set of vertices adjacent to $v$;
the \emph{closed neighborhood} of $v$ is defined as $N[v] = N(v) \cup \{v\}$. Two
vertices are \emph{true twins} if they have the same closed neighborhood and
\emph{false twins} if they have the same open neighborhood.

An online graph $G$ is a graph that is induced by its vertices which are
revealed one by one. The set of vertices $V(G)= \{v_1,v_2,\dots,v_n\}$ is ordered
by their occurrence in the online instance. The graph $G_t$ is the graph
induced by the first $t$ vertices of online graph $G$, i.e., $G_t =
G[v_1,\dots,v_t]$. For a fixed subgraph $H$, the
\textsc{Delayed $H$-Node-Deletion Problem} on graph $G$ is to select for every $t$ with $1\le t \le n$ a set $S_t
\subseteq {V(G_t)}$ such that $G_t-S_t$ is $H$-free and
$S_1 \subseteq \dots\subseteq S_n$. The goal is to minimize the size of $S_n$. The
\textsc{Delayed Connected $H$-Node-Deletion Problem} is the
same problem for a fixed connected subgraph $H$. An online algorithm working on
this problem has to decide on $S_t$ based only on $G_t$, independently of any
vertices that are revealed afterwards. If the algorithm works with advice, it additionally has access
to the values of $u_1(G),\dots,u_t(G)$ at step $t$. In the case of predictions
(``untrusted advice''), these can be correct or incorrect. We only consider the case where
$u_t(G) \in \{0,1\}$. The value $u_t(G)$ gives advice for vertex $v_t$ of graph
$G$. If $u_t(G) = 1$, the advice suggests that the vertex $v_t$ is part of a
fixed optimal solution of the problem and should be deleted, i.e., added to the
set $S_t$. If $u_t(G) = 0$, the advice suggests that the vertex $v_t$ should
not be deleted. 

For this minimization problem, $\text{cost}(\textsc{Opt}(G))$ denoted the cost
of the optimal solution on graph $G$, i.e., the least number of vertices that
need to be deleted in order for the graph $G$ to be $H$-free. For a
deterministic online algorithm $\textsc{Alg}$, $\text{cost}( \textsc{Alg}(G))$
represents the number of vertices the algorithm $\textsc{Alg}$ deletes during
its execution on the online graph $G$ and
$\text{cost}(\textsc{Alg}^{u_t(G)}(G))$ represents the number of vertices the algorithm
$\textsc{Alg}$ deletes with access to advice values $u_t(G)$ for $1 \le t \le
n$. The algorithm $\textsc{Alg}$ is $c$-competitive if, for every online graph
$G$ and some constant non-negative $\alpha$, $\text{cost}( \textsc{Alg}(G)) \le
c \cdot \text{cost}(\textsc{Opt}(G)) + \alpha$. The competitive ratio of
$\textsc{Alg}$ is defined as $c_{\textsc{Alg}} = \inf \{ c \ge 1 \mid \textsc{Alg}
\text{ is $c$-competitive} \}$. If the algorithm $\textsc{Alg}$ works with
access to predictions, $\textsc{Alg}$ is $(r,w)$-competitive if, for every
online graph $G$, $\text{cost}( \textsc{Alg}^{u_t(G)}(G)) \le r \cdot
\text{cost}(\textsc{Opt}(G)) + \alpha$ for some correct advice $u_t(G)$ (as
defined above) and $\text{cost}( \textsc{Alg}^{u_t(G)}(G)) \le w \cdot
\text{cost}(\textsc{Opt}(G)) + \alpha$ for every possible advice $u_t(G)$,
correct or incorrect, and some constant non-negative $\alpha$. The consistency
$r_{\textsc{Alg}}$ and the robustness $w_{\textsc{Alg}}$ are the corresponding
competitive ratios.

\section{A Pareto-Optimal Algorithm}\label{sec:results_upper}

A na\"{\i}ve algorithm can solve the \textsc{Delayed $H$-Node-Deletion Problem}
without advice by deleting every vertex of an induced copy of $H$ whenever it
appears. It is $k$-competitive for $k=|H|$ and was already presented by Chen et
al.~\cite{chen2021online}.

Let us now consider a simple family of algorithms $\textsc{Alg}_p$, that solve the
\textsc{Delayed $H$-Node-Deletion Problem} with predictions and establish a
first upper bound on the optimal competitive ratios.

\begin{definition}[Algorithm \boldmath$A\scalebox{0.85}{\textit{LG}}_p$\unboldmath]\label{alg:Ap}
  The algorithm $\textsc{Alg}_p$ with parameter $p \in [0,1)$ works on an online graph $G$
  and receives an advice $u_t(G)$ each time a vertex $v_t$ is revealed for $t$ with
  $1 \le t \le |G|$. The value of $u_t(G)$ is a single bit. If a vertex has advice
  $1$, it suggests that the vertex is part of a fixed optimal solution and
  should be deleted.  $\textsc{Alg}_p$ keeps track of two counters $d$ and $e$
  which are initialized to $0$. 
 
  Whenever an intact copy of an induced subgraph $H$ appears in the online graph
  $G$, $\textsc{Alg}_p$ distinguishes between the following cases to keep the graph $H$-free: 

  \begin{itemize}  
    \item \textit{Case 1.} If the copy of $H$ does not contain any vertices
    with advice $1$, then the advice must be incorrect. Going forward, the
    algorithm $\textsc{Alg}_p$ deletes all vertices of any induced subgraph $H$ that
    appears without incrementing $e$ or $d$.
    \item \textit{Case 2.} If $e/(e+d) > p$ or $d = 0$, $\textsc{Alg}_p$ deletes all $k = |H|$ vertices of the copy of $H$ 
    and increments $d$ by $1$.
    \item \textit{Case 3.} Else, $\textsc{Alg}_p$ deletes one vertex with advice $1$ of the copy of $H$ and increments $e$ by $1$. 
    If the copy $H$ contains multiple vertices with advice $1$, it chooses the one that appeared first. 
  \end{itemize}  
    
  If multiple intact copies of $H$ appear at once in $G$, $\textsc{Alg}_p$ chooses one
  arbitrary copy of $H$ first for the above case distinction and then continues
  choosing another copy of $H$ until $G$ is $H$-free. This is done before the
  next vertex of $G$ is revealed.
\end{definition}

It is clear that $\textsc{Alg}_p$ keeps any online graph $G$ $H$-free since, in every
iteration of the above case distinction, at least one induced copy of $H$ is
destroyed and $\textsc{Alg}_p$ iterates until $G$ is $H$-free before the next vertex is
revealed. 

The parameter $p$ essentially indicates how much $\textsc{Alg}_p$ trusts the advice. The
counter $e$ tracks how often $\textsc{Alg}_p$ follows the advice, while the
counter $d$ records how often it disregards it.  Before examining the
competitiveness of $\textsc{Alg}_p$, we show that $e/(e+d)$ approximates $p$ well
enough.

\begin{lemma}\label{lem:mathT}
  Consider an arbitrary online graph $G$ 
  that requires at least one vertex deletion to become $H$-free.
  Denote the final value of $d$ after the online execution of
  $\textsc{Alg}_p$ on $G$ by $\tilde{d}$ and the final value of $e$ by $\tilde{e}$. If
  $\tilde{d} + \tilde{e} > 0$, then $\tilde{e}/(\tilde{e} + \tilde{d})  \le p + 1/(\tilde{e} + \tilde{d})$.
\end{lemma}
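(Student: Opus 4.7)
The plan is to recast the desired inequality $\tilde{e}/(\tilde{e}+\tilde{d}) \le p + 1/(\tilde{e}+\tilde{d})$ in an additive, denominator-free form. Multiplying through by $\tilde{e}+\tilde{d} > 0$, the claim is equivalent to $\tilde{e} \le p(\tilde{e}+\tilde{d}) + 1$, i.e.\ $\tilde{e} - p(\tilde{e}+\tilde{d}) \le 1$. I would then prove the stronger invariant
\[
  e - p(e+d) \le 1
\]
at every point during the online execution of $\textsc{Alg}_p$, by induction on the number of revealed vertices (equivalently, on the number of updates to the counters).

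For the base case, the counters start at $e = d = 0$, so $e - p(e+d) = 0 \le 1$. For the inductive step, I would go through the three cases of \Cref{alg:Ap}. Case~1 leaves $e$ and $d$ unchanged, so the invariant is preserved trivially. In Case~2, $d$ is incremented by one while $e$ stays fixed, so the quantity $e - p(e+d)$ decreases by $p \ge 0$, hence stays at most $1$. The only nontrivial case is Case~3, where $e$ is incremented.

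The key observation here, and the one step that needs the trigger condition for Case~3, is that just before an increment of $e$ we have $e/(e+d) \le p$ and $d \ge 1$, which gives $e \le p(e+d)$, hence $e - p(e+d) \le 0$. After the update, the quantity becomes $(e+1) - p(e+1+d) = \bigl(e - p(e+d)\bigr) + (1-p) \le 0 + (1-p) \le 1$, so the invariant is preserved. This is really the heart of the argument: Case~3 may push the ratio slightly above $p$, but by at most $1 - p$ in the numerator, which is exactly the $1/(\tilde{e}+\tilde{d})$ slack in the lemma. Applying the invariant at the end of the execution (with $\tilde{e}+\tilde{d}>0$ by assumption) and dividing by $\tilde{e}+\tilde{d}$ yields the claimed bound. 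The main obstacle is simply spotting that the right invariant is additive rather than multiplicative; once that reformulation is in place, the case analysis is routine.
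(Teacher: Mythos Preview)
Your proof is correct and follows essentially the same inductive argument as the paper: maintain an invariant through the updates of $e$ and $d$, with the crux being that the precondition $e/(e+d)\le p$ for Case~3 caps the overshoot after an $e$-increment. Your additive reformulation $e - p(e+d)\le 1$ is a tidy simplification that sidesteps the paper's separate treatment of $e'=0$ versus $e'\ge 1$ in the $d$-increment step, but the underlying idea is the same.
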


\begin{proof}
  We show that $\tilde{e}/(\tilde{e} + \tilde{d}) \le p +
  1/(\tilde{e} + \tilde{d})$ by proving by induction over $e' + d' > 0$
  that $e'/(e' + d') \le p + 1/(e' + d')$ holds for any values
  $e'$ and $d'$ of $e$ and $d$ during the execution of $\textsc{Alg}_p$.

  The base case of $e' + d' = 1$ is only possible after case~2 of $\textsc{Alg}_p$
  has been executed once.
  Therefore, $d' = 1$ and $e' = 0$, and it follows that $e'/(e' + d') = 0
  \le p + 1$ for any $p\in [0,1)$.

  So assume that the induction hypothesis
  $e'/(e' + d') \le p + 1/(e' + d')$ holds for some $e' + d' > 0$.
  We show that the property holds for $e' + d' + 1$ by case distinction over
  which counter has been incremented last by $\textsc{Alg}_p$.

  \begin{itemize}
    \item \textit{Case 1.} The counter $e$ has been incremented last to $e'+1$
      in case~3 of the algorithm. Therefore, $e'/(e' + d') \le p$. Then it follows directly that
      \[\frac{e'+1}{e' + d' + 1}\leq \frac{e'}{e' + d' }+\frac{1}{e'+d'+1}\leq p+\frac{1}{e'+d'+1}\,.\]
    \item \textit{Case 2.} The counter $d$ has been incremented last to $d'+1$
      in Case $2$ of the algorithm. Then, $e'/(e' + d') \le p + 1/(e' + d')$ holds by induction hypothesis.
      If $e'=0$, then $e'/(e'+d'+1)\le p+1/(e'+d'+1)$ trivially. If $e'\geq 1$, then
      \begin{align*}
      \frac{e'}{e'+d'+1}&\leq \frac{e'}{e'+d'}-e'\cdot \left(\frac{1}{e'+d'}-\frac{1}{e'+d'+1}\right)\\
      &\leq p+\frac{1}{e'+d'}-\left(\frac{1}{e'+d'}-\frac{1}{e'+d'+1}\right)\\
      &=p+\frac{1}{e'+d'+1}\,.
      \end{align*}
    \end{itemize}
\end{proof}

\begin{lemma}\label{lem:mathU} 
  Consider an arbitrary online graph $G$ 
  that requires at least one vertex deletion to become $H$-free.
  Denote the final value of $d$ after the online execution of
  $\textsc{Alg}_p$ on $G$ by $\tilde{d}$ and the final value of $e$ by $\tilde{e}$. If
  $\tilde{d} + \tilde{e} > 0$, then $\tilde{d}/(\tilde{e} + \tilde{d}) \le (1-p) + 1/(\tilde{e} + \tilde{d})$.
\end{lemma}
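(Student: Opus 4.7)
The plan is to mirror the inductive argument of \Cref{lem:mathT}, this time establishing by induction on $e'+d'>0$ that $d'/(e'+d') \le (1-p) + 1/(e'+d')$ holds at every stage of the execution, where $e',d'$ denote the current counter values. The symmetry to exploit is that Case~2 of $\textsc{Alg}_p$ is entered exactly when $e'/(e'+d') > p$ or $d' = 0$, equivalently when $d'/(e'+d') < 1-p$ or $d' = 0$; this condition plays the same role for $d$ that the condition $e'/(e'+d') \le p$ plays for $e$ in the proof of \Cref{lem:mathT}. Note that the claim is not directly derivable from \Cref{lem:mathT}, since that lemma yields only a lower bound on $\tilde d/(\tilde e+\tilde d) = 1 - \tilde e/(\tilde e+\tilde d)$.

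For the base case $e'+d'=1$, I would note that the first execution of the case distinction cannot take Case~3, which requires $d\ne 0$, so the first increment must be of $d$. This yields $d'=1$, $e'=0$, and $d'/(e'+d')=1\le (1-p)+1$ for any $p\in[0,1)$.

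The inductive step splits on which counter was incremented last. If $d$ was incremented, the triggering condition gives either $d' = 0$---in which case $(d'+1)/(e'+d'+1)=1/(e'+1)$ is trivially below the bound---or $d'/(e'+d') < 1-p$, from which the desired inequality follows in one line by writing $(d'+1)/(e'+d'+1) \le d'/(e'+d') + 1/(e'+d'+1) < (1-p) + 1/(e'+d'+1)$. If $e$ was incremented, I would reuse the calculation from the second case of \Cref{lem:mathT} with $d$ in place of $e$: the claim is immediate for $d'=0$, and for $d' \ge 1$ one writes $d'/(e'+d'+1) = d'/(e'+d') - d'\bigl(1/(e'+d') - 1/(e'+d'+1)\bigr)$ and uses $d' \ge 1$ together with the induction hypothesis to absorb the slack $1/(e'+d')$ into $1/(e'+d'+1)$.

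The main obstacle I expect is the mild asymmetry of $\textsc{Alg}_p$ in $e$ and $d$: Case~2 carries an additional escape condition $d=0$ with no mirror in Case~3, so the subcase $d'=0$ must be isolated both in the base case and in the inductive step where $d$ is incremented. Once these trivial subcases are handled separately, the remaining estimates are essentially transcriptions of those in the proof of \Cref{lem:mathT}.
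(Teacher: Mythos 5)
Your proposal is correct and follows essentially the same route as the paper: the same induction on $e'+d'$ maintaining the invariant $d'/(e'+d')\le(1-p)+1/(e'+d')$, the same base case forced by the fact that the first increment must come from Case~2, and the same two-case inductive step with identical estimates (the paper disposes of the $d'=0$ subcases implicitly, since $e'+d'>0$ and $d'=0$ force $e'/(e'+d')=1>p$, while you isolate them explicitly, which is equally fine).
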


\begin{proof}
    This proof works similar to the proof of \cref{lem:mathT}.
    The base case $e' + d' = 1$ is again only possible after case~2 of $\textsc{Alg}_p$
    has been executed once.
    Therefore, $d' = 1$ and $e' = 0$, it follows that $d'/(e' + d') = 1
    \le 2-p$ for any $p\in [0,1)$.

    So assume that the induction hypothesis $d'/(e' + d') \le (1-p) + 1/(e' + d')$
    holds for some $e' + d' > 0$.
    We show that the property holds for $e' + d' + 1$ by case distinction over
    which counter has been incremented last by $\textsc{Alg}_p$.
   
    \begin{itemize}
    \item \textit{Case 1.}
       The counter $d$ has been incremented last to $d'+1$ in Case $2$ of the
       algorithm. Therefore, $e'/(e'+d') > p$, and thus $d'/(e'+d') < 1-p$. 
       It follows directly that 
       \[\frac{d'+1}{e'+d'+1}\leq \frac{d'}{e'+d'}+\frac{1}{e'+d'+1}\leq 1-p+\frac{1}{e'+d'+1}\,.\]
       
    \item \textit{Case 2.} The counter $e$ has been incremented last to $e'+1$
      in case~3 of the algorithm. Since $e$ is only ever incremented if $d>0$, we know that $d'\geq 1$. This implies that
      \begin{align*}
      \frac{d'}{d'+e'+1}&=\frac{d'}{d'+e'}-d'\cdot\left(\frac{1}{d'+e'}-\frac{1}{d'+e'+1}\right)\\
      &\leq 1-p+\frac{1}{e'+d'}-\left(\frac{1}{d'+e'}-\frac{1}{d'+e'+1}\right)\\
      &=1-p+\frac{1}{e'+d'+1}\,.
      \end{align*}
    \end{itemize} 
\end{proof}

\begin{theorem}\label{thm:compAlg}
  $\textsc{Alg}_p$ is $(k-p\cdot (k-1), k+p/(1-p))$-competitive for $k = |H|$.
\end{theorem}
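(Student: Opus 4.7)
The plan is to decompose the cost of $\textsc{Alg}_p$, charge its firings against $|\textsc{Opt}(G)|$, and then combine with \cref{lem:mathT,lem:mathU}. Writing $\tilde{d}_1$ for the number of intact copies of $H$ that are processed after Case~1 has been triggered, I first observe that the total cost equals $k\tilde{d}+\tilde{e}+k\tilde{d}_1$, since each Case~2 firing and each post-trigger Case~1 firing deletes $k$ vertices while each Case~3 firing deletes only one.

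The core of the argument will be a charging step. For every Case~2 firing and every post-trigger Case~1 firing I pick an arbitrary \textsc{Opt} vertex lying in the intact copy $H_i$ being destroyed; such a vertex exists because \textsc{Opt} must destroy every intact copy of $H$. Since these firings delete all $k$ vertices of $H_i$, the chosen vertex is itself deleted at the firing and therefore cannot appear in any later intact copy of $H$. This proves the choices pairwise distinct and yields $\tilde{d}+\tilde{d}_1\le |\textsc{Opt}(G)|$ for \emph{any} (possibly adversarial) advice. In the consistency setting the advice corresponds to some fixed optimal solution, so every intact copy of $H$ contains an \textsc{Opt} vertex (which carries advice $1$); hence Case~1 never triggers and $\tilde{d}_1 = 0$, and the vertex deleted by a Case~3 firing is itself in \textsc{Opt}. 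Extending the same distinctness argument to these Case~3 deletions then yields the stronger bound $\tilde{d}+\tilde{e}\le |\textsc{Opt}(G)|$.

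With this in hand, for consistency I rearrange \cref{lem:mathU} to $\tilde{d}\le (1-p)(\tilde{d}+\tilde{e})+1$ and compute
\[
k\tilde{d}+\tilde{e} = (k-1)\tilde{d} + (\tilde{d}+\tilde{e}) \le [k-p(k-1)](\tilde{d}+\tilde{e}) + (k-1) \le [k-p(k-1)]\,|\textsc{Opt}(G)| + (k-1)\,,
\]
giving the claimed consistency ratio $k-p(k-1)$. For robustness I rearrange \cref{lem:mathT} to $\tilde{e}\le \tfrac{p}{1-p}\tilde{d} + \tfrac{1}{1-p}$ and, using $k+p/(1-p)\ge k$ to absorb the Case~1 contribution, estimate
\[
k\tilde{d}+\tilde{e}+k\tilde{d}_1 \le \left(k+\tfrac{p}{1-p}\right)\tilde{d} + k\tilde{d}_1 + \tfrac{1}{1-p} \le \left(k+\tfrac{p}{1-p}\right)|\textsc{Opt}(G)| + \tfrac{1}{1-p}\,.
\]
The degenerate case $\tilde{d}+\tilde{e}=0$ (to which \cref{lem:mathT,lem:mathU} do not apply) reduces to the naive $k$-competitiveness of the Case~1 phase, and $|\textsc{Opt}(G)|=0$ makes the cost trivially zero.

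The step I expect to require the most care is the charging argument itself: one must track distinctness of the chosen \textsc{Opt} vertices across arbitrarily interleaved firings of the three cases, and keep the consistency and robustness analyses cleanly separated so that Case~3 deletions are counted as \textsc{Opt} vertices only in the former setting, where the advice is guaranteed to be correct.
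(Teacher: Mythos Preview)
Your proof is correct and follows essentially the same approach as the paper: both charge the full-copy deletions (and, under correct advice, the Case~3 deletions) to pairwise distinct \textsc{Opt} vertices, obtaining $\tilde d+\tilde d_1\le|\textsc{Opt}(G)|$ for robustness and $\tilde d+\tilde e\le|\textsc{Opt}(G)|$ for consistency, and then apply \cref{lem:mathU} and \cref{lem:mathT} respectively. Your unified cost formula $k\tilde d+\tilde e+k\tilde d_1$ and the explicit name $\tilde d_1$ make the robustness analysis slightly more streamlined than the paper's two-subcase split (copy without an advice-$1$ vertex present or not), but the underlying argument is identical.
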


\begin{proof}
  Consider an arbitrary online graph $G$ 
  that requires $i>0$ vertex deletions to become $H$-free, i.e., $\text{cost}(\textsc{Opt}(G)) = i$.
  We denote the final value of $d$ after the online execution of $\textsc{Alg}_p$ on $G$
  by $\tilde{d}$ and the final value of $e$ by $\tilde{e}$.
  
  If the advice $u_t(G)$ is correct, $G$ contains $i$ \emph{optimal vertices} which have
  advice $1$. There are no other vertices with advice $1$ in $G$.  Every
  induced copy of $H$ in $G$ contains at least one vertex with advice $1$.
  $\textsc{Alg}_p$ deletes at most those $i$ vertices with advice $1$ 
  and at most $\tilde{d}(k-1)$ vertices with advice $0$, because it only deletes
  vertices with advice $0$ of one induced copy of $H$ when incrementing $d$, and
  there are at most $k-1$ vertices with advice $0$ per copy of $H$. We know
  that $\tilde{d} + \tilde{e} \le i$ since, every time $d$ or $e$ are
  incremented in $\textsc{Alg}_p$, at least one of the $i$ vertices with advice $1$ is
  deleted. Since $i>0$ and since the advice $u_t(G)$ is correct, $d$ is
  incremented at least once by $\textsc{Alg}_p$ and $\tilde{d} + \tilde{e} > 0$. From
  \cref{lem:mathU}, it follows that $\tilde{d} \le (1-p) (\tilde{e} +
  \tilde{d})+1 \le i(1-p)+1$, and we therefore get
  \begin{align*}
    \text{cost}(\textsc{Alg}_p^{u_t(G)}(G)) &\le i+\tilde{d}(k-1) \\
    &\le i+i(1-p)(k-1)+(k-1) \\
    &\le (k-p\cdot (k-1)) \cdot \text{cost}(\textsc{Opt}(G)) + (k-1) \,,
  \end{align*}
  and thus $r_{\textsc{Alg}_p} \le k-p\cdot (k-1)$ as $k-1$ is constant.
  
  If the advice $u_t(G)$ is incorrect but $G$ does not contain any induced
  subgraph $H$ without a vertex with advice $1$, then $\textsc{Alg}_p$ deletes exactly
  $\tilde{e}+\tilde{d}k$ vertices.  $\textsc{Alg}_p$ guarantees that every time $d$ is
  incremented, at least one out of the $i$ optimal vertices from a distinct
  $\textsc{Opt}(G)$ of $G$ is deleted, because every induced copy of $H$ in $G$
  must include at least one of those, and every time $d$ is incremented, all
  vertices of an induced $H$-subgraph are deleted. Therefore, $\tilde{d} \le i$
  holds.
  Since $i>0$, $d$ is incremented at least once by $\textsc{Alg}_p$ and $\tilde{d} +
  \tilde{e} > 0$. From \cref{lem:mathT}, it follows that
  $\tilde{e} \le p(\tilde{e}+\tilde{d})+1$, which implies
  \[ \tilde{e}  \le \frac{p \tilde{d} + 1}{1-p} \le \frac{p i + 1}{1-p}  \,. \]
  Hence, we get
  \[
    \text{cost}(\textsc{Alg}_p^{u_t(G)}(G)) \le \tilde{e}+\tilde{d}k 
    \le \frac{pi+1}{1-p} + ik 
    \le \left(k+ \frac{p}{1-p}\right) \cdot \text{cost}(\textsc{Opt}(G)) + \frac{1}{1-p}  \,.
  \]

  If the advice $u_t(G)$ is incorrect and contains an induced copy of $H$ without any vertex with advice $1$, 
  $\textsc{Alg}_p$ still deletes $\tilde{e}+\tilde{d}k$ vertices until the copy of $H$
  without any vertex with advice $1$ appears.  After that it deletes $v$ times
  $k$ vertices for some $v > 0$. Note that, after the first copy of $H$ without
  any vertex with advice $1$ appeared, $e$ and $d$ are not further incremented
  and reached their final value $\tilde{e}$ and $\tilde{d}$. The implications
  of \cref{lem:mathT,lem:mathU} still hold. It further holds that
  $\tilde{d} + v \le i$, 
  because every time $d$ is incremented and for every one of the $v$ deletions, $\textsc{Alg}_p$ deletes
  at least one of the $i$ optimal vertices of a distinct $\textsc{Opt}(G)$ of
  $G$ since $\textsc{Alg}_p$ deletes all vertices of the appearing $H$-subgraphs. If
  $\tilde{d} + \tilde{e} > 0$, then $\tilde{e} \le (pi+1)/(1-p)$ still
  holds by \cref{lem:mathT}. Otherwise, $\tilde{e}=0 \le (pi+1)/(1-p)$
  holds trivially. We get
  \[
    \text{cost}(\textsc{Alg}_p^{u_t(G)}(G)) \le \tilde{e}+(\tilde{d}+v)k 
    \le \frac{pi+1}{1-p} + ik 
    \le \left(k+ \frac{p}{1-p} \right) \cdot \text{cost}(\textsc{Opt}(G)) + \frac{1}{1-p} \,.
  \]
  It follows that $w_{\textsc{Alg}_p} \le k+ p/(1-p) $ holds since $1/(1-p)$ is constant for a fixed $\textsc{Alg}_p$.
\end{proof}
  
Note that $\textsc{Alg}_p$ is not defined for $p=1$, but we can easily define an Algorithm $\textsc{Alg}_1$ which always trusts the 
advice and only deletes vertices with advice $1$ if possible. If the advice is trusted, $\textsc{Alg}_1$ deletes an 
optimal solution and is $1$-competitive. But if the advice is untrusted, there might exist
an online graph $G$ with adversarially chosen advice on which $\textsc{Alg}_1$ would perform arbitrarily badly.
For this, consider a forbidden connected subgraph $H$ with  $|H| > 1$ and an online graph $G$ that presents arbitrarily many 
induced copies of $H$ which all overlap at exactly one vertex and are otherwise disjoint such that deleting this vertex would result in an 
$H$-free graph. The optimal solution would thus require one deletion. But in the online problem with 
predictions, the adversary could now choose advice $0$ for this optimal vertex and advice $1$ for some other, 
non-optimal vertex for each appearing induced copy of $H$. $\textsc{Alg}_1$ deletes the vertices with advice $1$ for each copy of $H$ and since there 
is no copy of $H$ without any vertex with advice $1$, $\textsc{Alg}_1$ never deletes the optimal vertex. 
For arbitrarily many induced copies of $H$ presented in such a way, $\textsc{Alg}_1$ performs arbitrarily badly.

While algorithm $\textsc{Alg}_p$ (\cref{alg:Ap}) even works for unconnected $H$, we now
focus on connected forbidden subgraphs $H$ and hence on the
\textsc{Delayed Connected $H$-Node-Deletion Problem}.

\section{Lower Bounds without Advice} \label{sec:results_wo_advice}

We now inspect the competitive ratios of algorithms solving the \textsc{Delayed
Connected $H$-Node-Deletion Problem} without advice for connected induced
forbidden subgraphs $H$. Chen et al.~\cite{chen2021online} showed that there
is no online algorithm without advice solving the \textsc{Delayed
$H$-Node-Deletion Problem} for the forbidden subgraph $H=C_k$ (i.e., the cycle on
$k$ vertices) with a competitive ratio better than $k$, for any $k>4$, by
giving adversarial strategies that force $k$ deletions for a gadget that
requires $1$ deletion in the offline setting. We use and expand their idea for
more general $H$.

\begin{lemma} \label{lem:unadv_cl}
  Let $H$ be a connected subgraph that does not contain two false twins.
  There does not exist any deterministic algorithm solving the
  \textsc{Delayed Connected $H$-Node-Deletion Problem} with a competitive ratio
  better than $k = |H| >1$.
\end{lemma}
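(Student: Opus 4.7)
The plan is to generalize the adversarial construction of Chen et al.\ (which handled the cycle case $H=C_k$) to arbitrary connected $H$ without false twins. Given an arbitrary deterministic algorithm $\textsc{Alg}$, I would build an adaptive online graph that forces $\textsc{Alg}$ to delete roughly $k$ times as many vertices as the offline optimum.

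First, the adversary reveals the $k-1$ vertices of $H-v^*$ for some pivot $v^*\in V(H)$; since this subgraph has fewer than $k$ vertices it cannot contain an induced copy of $H$, so no deletion is triggered. Revealing $v^*$ then completes the first induced copy of $H$ and forces $\textsc{Alg}$ to delete at least one vertex. In the second phase, whenever $\textsc{Alg}$ deletes a vertex $u$ of the current copy of $H$, the adversary reveals a new external vertex $u'$ whose open neighborhood in the surviving graph is engineered to coincide with the original open neighborhood of $u$ in $H$. Then $(V(H)\setminus\{u\})\cup\{u'\}$ is a fresh induced copy of $H$, forcing $\textsc{Alg}$ to delete again. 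This replacement step is iterated, rotating through the positions of $H$ until $\textsc{Alg}$ has been forced into $k$ distinct deletions inside a single gadget.

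The no-false-twins hypothesis on $H$ enters precisely to guarantee that this replacement is well-defined and introduces no uncontrolled extra induced copies. Any accidental induced copy of $H$ containing both $u$ and $u'$ would, by the construction of $u'$, contain two vertices with identical open neighborhoods inside that copy; those would be false twins inside $H$ itself, contradicting the hypothesis. Consequently, revealing $u'$ creates exactly the single intended new copy, and the adversary retains full control of what the algorithm is forced to face.

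To conclude, I would concatenate $N$ independent gadgets of this form and argue that $\textsc{Alg}$ is forced into at least $kN$ deletions while the offline cost remains $O(N)$, yielding $c_{\textsc{Alg}}\ge k$ as $N\to\infty$. The main obstacle is orchestrating the iterative construction so that two properties hold simultaneously: (i) no matter which vertex of the current induced copy of $H$ the algorithm chooses to delete, the replacement step can be carried out and keeps forcing deletions up to $k$ per gadget, and (ii) a constant-sized offline cover per gadget is preserved as the graph grows (e.g., a single anchor vertex common to every induced $H$-copy ever produced inside the gadget). The no-false-twins hypothesis is the structural tool that drives both: it gives a uniqueness of the "role" of each vertex of $H$ under its open neighborhood, which is what lets the adversary pick the correct replacement at each step and simultaneously rules out the extraneous induced copies that would otherwise blow up $\text{cost}(\textsc{Opt}(G))$.
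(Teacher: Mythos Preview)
Your overall strategy—reinserting a ``clone'' with the same open neighborhood whenever the algorithm deletes a vertex—is exactly what the paper does. But two pieces you flag as ``obstacles'' are where the actual proof lives, and your sketch does not resolve them.

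\medskip

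\textbf{The anchor mechanism.} You say the replacement is iterated ``rotating through the positions of $H$'' until $k$ deletions occur, and separately that there should be ``a single anchor vertex common to every induced $H$-copy''. You never explain how the adversary achieves both at once, and the pivot $v^*$ you introduce plays no role after the first copy is completed. The missing idea is that the adversary designates one position $v_i$ of the initial copy that is \emph{never} reinserted: every deletion outside class $i$ triggers a clone, but deleting $v_i$ ends the gadget. Since the algorithm cannot tell which position is special, the adversary chooses $i$ adaptively to be the last original position the algorithm hits; this forces at least $k$ deletions before the gadget terminates, and $v_i$ is automatically the anchor you want.

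\medskip

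\textbf{The offline bound.} Your false-twin argument only treats a single pair $(u,u')$: you rule out induced copies of $H$ containing both a deleted vertex and its immediate replacement. But the offline optimum sees the whole gadget after many rounds of replacement, and you must rule out \emph{every} induced $H$-copy avoiding $v_i$, not just those using one such pair. The clean argument is to partition the gadget into $k$ equivalence classes $V^i_1,\dots,V^i_k$ (each consisting of an original vertex together with all its successive clones). Any two vertices in the same class are non-adjacent with identical open neighborhoods, hence false twins in the gadget. Since $V^i_i=\{v_i\}$, removing $v_i$ leaves only $k-1$ nonempty classes; by pigeonhole, any $k$ remaining vertices contain two from the same class, i.e., a pair of false twins, so they cannot induce $H$. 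This is where the hypothesis is really used, and it gives $\text{cost}(\textsc{Opt}(g^i))=1$ directly.

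\medskip

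A minor point: your insistence that each reinsertion creates ``exactly the single intended new copy'' is not needed and may not even hold. What matters online is that \emph{at least one} new copy appears (forcing another deletion); what matters offline is that \emph{all} copies contain $v_i$, which is exactly what the equivalence-class argument delivers.
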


\begin{proof}
  Recall that two vertices are false twins if they are non-adjacent and have
  the same open neighborhood. An adversary can construct $k$ different gadgets
  $g^i$ for $i \in [1,k]$ in the following way. First, a copy of $H$ is
  presented, formed by vertices $v_1, \dots ,v_k$. For every deleted vertex
  except vertex $v_i$, a new vertex with the same open neighborhood as the
  deleted vertex is reinserted. Apart from that, no additional edge, in
  particular no edge between $v$ and $v'$, is introduced. It is clear that each
  of these reinsertions produces a new induced copy of $H$ in $g^i$, thus
  forcing another vertex deletion to keep the gadget $H$-free until $v_i$ is
  deleted. Since $v_i$ can be chosen arbitrarily and is indistinguishable from
  all remaining vertices of the originally presented subgraph $H$, there is,
  for any deterministic algorithm, a gadget $g^i$ for which the algorithm has
  to delete every vertex of the originally presented subgraph $H$ until it
  deletes $v_i$ (if ever). Thus, the gadget is forcing it to delete at least
  $k$ vertices until $g^i$ is finally $H$-free.
  
  We now prove that only deleting $v_i$ would have sufficed, i.e., $g^i - \{
  v_i \}$ is $H$-free for any $g^i$ formed by above construction under an
  arbitrary online algorithm. We partition the set $V^i$ of vertices of $g^i$
  in $k$ equivalence classes $V^i_1, \dots ,V^i_k$, such that $v \in V^i_j$ if
  $v=v_j$ or $v$ was reinserted for some $v'$ and $v' \in V^i_j$ for $j \in
  [1,k]$; see \cref{fig:reinsertionK3} for an illustration. Note that by
  construction of the reinsertions, if $v,u \in V^i_j$ and $v \neq u$, then
  $v$ and $u$ are non-adjacent. Also, $V^i_i = \{ v_i \}$ since there are no
  reinsertions for $v_i$. Therefore, $V^i \backslash \{ v_i \}$ is
  partitioned into $k-1$ equivalence classes with $V^i \backslash \{ v_i \} =
  \bigcupdot_{i \neq j} V^i_j$ for $j \in [1,k]$. Any subset of $k$ vertices
  of $V^i\backslash \{ v_i \}$ must hence include at least two vertices from
  the same equivalence class that are by construction non-adjacent and share
  the same neighborhood and hence are false twins. Thus, there cannot be an
  induced subgraph $H$ in $g^i - \{ v_i \}$ and $v_i$ forms an optimal
  solution of size one for $g^i$. Note that, if an algorithm chooses to never
  delete $v_i$, it has an arbitrarily bad competitive ratio because it
  deletes arbitrary many vertices on $g^i$, where one would suffice.

  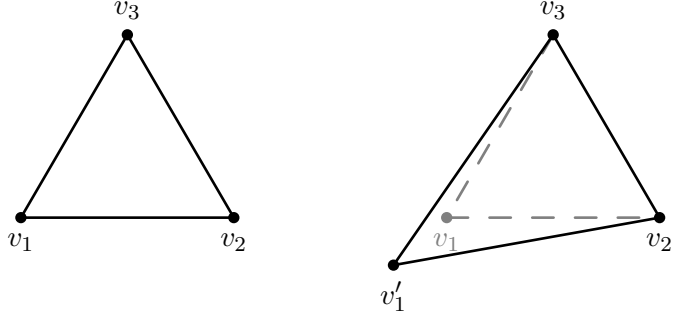
\begin{figure}[t]
    \begin{center}
      \definecolor{yqyqyq}{rgb}{0.5019607843137255,0.5019607843137255,0.5019607843137255}
      \begin{tikzpicture}[line cap=round,line join=round,x=0.7cm,y=0.7cm]
      	\coordinate (a) at (1,-0.9);
      	\def\radius{1.5pt};
        \draw [line width=1.pt,dash pattern=on 8pt off 8pt,color=yqyqyq] (4.,3.4641016151377544)-- (2.,0.);
        \draw [line width=1.pt,dash pattern=on 8pt off 8pt,color=yqyqyq] (2.,0.)-- (6.,0.);
        \draw [line width=1.pt] (4.,3.4641016151377544)-- (6.,0.);
        \draw [line width=1.pt] (-4.,3.4641016151377544)-- (-6.,0.);
        \draw [line width=1.pt] (-6.,0.)-- (-2.,0.);
        \draw [line width=1.pt] (-4.,3.4641016151377544)-- (-2.,0.);
        \draw [line width=1.pt] (4.,3.4641016151377544)-- (a);
        \draw [line width=1.pt] (a)-- (6.,0.);
        \node[fill=black, label=below:{$v_2$}, circle, inner sep=\radius] at (6.,0.) {};
        \node[color=yqyqyq, fill, label={[text=yqyqyq]below:{$v_1$}}, circle, inner sep=\radius] at (2,0) {};
        \node[fill=black, label=above:{$v_3$}, circle, inner sep=\radius] at (4.,3.4641016151377544) {};
        \node[fill=black, label=below:{$v_2$}, circle, inner sep=\radius] at (-2.,0.) {};
        \node[fill=black, label=below:{$v_1$}, circle, inner sep=\radius] at (-6.,0.) {};
        \node[fill=black, label=above:{$v_3$}, circle, inner sep=\radius] at (-4.,3.4641016151377544) {};
        \node[fill=black, label=below:{$v_1'$}, circle, inner sep=\radius] at (a) {};
      \end{tikzpicture}
    \end{center}
    \caption{A gadget $g^i$ according to \cref{lem:unadv_cl} for $H = K_3$ and $i \neq 1$. On the left after presenting the first copy of $H$ and on the right after an algorithm deleted $v_1$ and $v_1'$ was reinserted. The vertices $v_1$ and $v_1'$ are in the same equivalence class $V^i_1$. Grey vertices indicate that they were deleted by the algorithm and the incident edges of deleted vertices are displayed as dashed.}
    \label{fig:reinsertionK3}
  \end{figure}

  For any deterministic online algorithm $\textsc{Alg}$ that solves this
  problem and arbitrary $m \ge 1$, there exists now an adversarial strategy
  which presents an online graph $G^m$ that repeats $m$ such vertex-disjoint
  gadgets $g^{i_1}, \dots, g^{i_m}$ such that it forces at least $k$ vertex
  deletions for each gadget where one would suffice by always choosing an $i_l$
  such that $v_{i_l}$ is deleted last by $\textsc{Alg}$. Therefore,
  $\text{cost}(\textsc{Alg}(G^m)) \ge mk$ and $\text{cost}(\textsc{Opt}(G^m)) =
  m$. It follows that there does not exist any deterministic online algorithm
  solving this problem with a competitive ratio better than $k$. 
\end{proof}

We can show the same property for forbidden subgraphs $H$ that do not contain
true twins, i.e., two adjacent vertices with the same closed neighborhood.

\begin{lemma} \label{lem:unadv_tr}
   Let $H$ be a connected subgraph that does not contain two true twins. There
   does not exist any deterministic algorithm solving the \textsc{Delayed
   Connected $H$-Node-Deletion Problem} with a competitive ratio better than $k =
   |H| > 1$. 
\end{lemma}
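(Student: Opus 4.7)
The plan is to mirror the proof of \cref{lem:unadv_cl}, replacing the false-twin reinsertion rule with a true-twin one. For each $i \in [1,k]$, I construct a gadget $g^i$ that starts with an induced copy of $H$ on vertices $v_1, \ldots, v_k$ and, whenever a deterministic algorithm deletes a vertex $v \in V^i_j$ with $j \neq i$, reinserts a new vertex $v'$ that has the same \emph{closed} neighborhood as $v$ in $g^i$. Concretely, $v'$ is made adjacent to every other previously revealed vertex of $V^i_j$ (including $v$ itself) and to every previously revealed vertex in each class $V^i_l$ with $v_l$ adjacent to $v_j$ in $H$, and to no vertex of any non-neighboring class. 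This rule maintains, over the whole $g^i$, a true-twin blow-up of $H$: each class $V^i_j$ is a clique, and two distinct classes are either completely joined or completely non-adjacent according to whether the corresponding vertices of $H$ are adjacent.

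Exactly as in \cref{lem:unadv_cl}, picking the most recently reinserted representative from each class yields a fresh induced copy of $H$ after every reinsertion, so the algorithm is forced to keep deleting; and since the $k$ originals are indistinguishable, the adversary can commit to $i$ adaptively so that $v_i$ is the last of $v_1, \ldots, v_k$ ever deleted, guaranteeing at least $k$ deletions per gadget. Concatenating $m$ vertex-disjoint copies then gives $\text{cost}(\textsc{Alg}(G^m)) \geq mk$ versus $\text{cost}(\textsc{Opt}(G^m)) = m$, exactly as in \cref{lem:unadv_cl}.

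The key new step is verifying that $g^i - \{v_i\}$ is $H$-free. I would again partition $V(g^i) \setminus \{v_i\}$ into the $k-1$ non-empty classes $V^i_j$ with $j \neq i$ and observe that, by the reinsertion rule, any two distinct $u, w \in V^i_j$ satisfy $u \sim w$ and share the same neighborhood outside $V^i_j$, that is, $N[u] = N[w]$ in $g^i - \{v_i\}$. This closed-neighborhood equality is inherited by every induced subgraph that contains both $u$ and $w$, so $u$ and $w$ remain true twins inside it. By pigeonhole, any $k$ vertices chosen from the $k-1$ non-empty classes of $V(g^i) \setminus \{v_i\}$ must contain two vertices from a common class, and therefore induce a graph containing a pair of true twins; since $H$ has no such pair by hypothesis, no such induced subgraph can be isomorphic to $H$. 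The delicate point I expect to need care on is making this blow-up invariant precise over all of $g^i$ (including already-deleted vertices), since the pigeonhole step picks arbitrary vertices of $g^i - \{v_i\}$ that need not have been simultaneously alive, so the edges between a newly reinserted vertex and the previously deleted members of its class must be included for the within-class clique property to hold on the whole gadget.
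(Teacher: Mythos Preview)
Your proposal is correct and follows essentially the same approach as the paper's proof: a true-twin blow-up gadget in which each reinserted vertex joins its own class as a clique and is completely joined to exactly the neighboring classes, followed by the pigeonhole argument that any $k$ vertices of $g^i-\{v_i\}$ hit some class twice and hence contain a pair of true twins. Your explicit formulation of the reinsertion rule in terms of the classes $V^i_l$ (rather than the paper's ``neighborhood of $v$ together with its own class'') is equivalent under the blow-up invariant, and you have correctly flagged the one subtlety the paper leaves implicit, namely that the new vertex must also be joined to already-deleted members of its class so that the within-class clique property holds on all of $g^i$, not just on the currently alive vertices.
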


\begin{proof}
  We prove the claim by using a similar gadget $g^i$ for $i \in [1,k]$ as
  before. First, a copy of $H$ formed by vertices $v_1, \dots ,v_k$ is
  presented. We again define reinsertions and a partition of the set $V^i$
  of vertices of $g^i$ into $k$ equivalence classes $V^i_1, \dots ,V^i_k$
  such that $v \in V^i_j$ if $v=v_j$ or $v$ was reinserted for some $v'$
  and $v' \in V^i_j$ for $j \in [1,k]$. Whenever a vertex $v \neq v_i$ is
  deleted, a new vertex $v'$ is reinserted. Vertex $v'$ shares an edge with
  every vertex of the neighborhood of $v$ and with every vertex in its
  equivalence class. Because $v$ and $v'$ are in the same equivalence
  class, the reinsertion leads to them having the same closed neighborhood;
  see \cref{fig:reinsertionElse} for an illustration. 
  \begin{figure}[t]
      \begin{center}
        \definecolor{yqyqyq}{rgb}{0.5019607843137255,0.5019607843137255,0.5019607843137255}
          \begin{tikzpicture}[line cap=round,line join=round,x=0.7cm,y=0.7cm]
          	  \def\radius{1.5pt};
          	  \coordinate (a) at (1,-1);
              \draw [line width=1.pt] (-2.,0.)-- (-6.,0.);
              \draw [line width=1.pt] (-6.,0.)-- (-6.,4.);
              \draw [line width=1.pt] (-6.,4.)-- (-2.,4.);
              \draw [line width=1.pt] (-2.,4.)-- (-2.,0.);
              \draw [line width=1.pt,dash pattern=on 8pt off 8pt,color=yqyqyq] (2.,0.)-- (6.,0.);
              \draw [line width=1.pt] (6.,0.)-- (6.,4.);
              \draw [line width=1.pt] (6.,4.)-- (2.,4.);
              \draw [line width=1.pt,dash pattern=on 8pt off 8pt,color=yqyqyq] (2.,4.)-- (2.,0.);
              \draw [line width=1.pt,dash pattern=on 8pt off 8pt,color=yqyqyq] (a)-- (2.,0.);
              \draw [line width=1.pt] (2.,4.)-- (a);
              \draw [line width=1.pt] (a)-- (6.,0.);
              \node[fill=black, label=below:{$v_2$}, circle, inner sep=\radius] at (-2.,0.) {};
              \node[fill=black, label=below:{$v_1$}, circle, inner sep=\radius] at (-6.,0.) {};
              \node[fill=black, label=above:{$v_4$}, circle, inner sep=\radius] at (-6.,4.) {};
              \node[fill=black, label=above:{$v_3$}, circle, inner sep=\radius] at (-2,4.) {};

              \node[color=yqyqyq, fill,label={[text=yqyqyq]below right:{$v_1$}}, circle, inner sep=\radius] at (2,0) {};
              \node[fill=black, label=below:{$v_2$}, circle, inner sep=\radius] at (6,0) {}; 
              \node[fill=black, label=above:{$v_3$}, circle, inner sep=\radius] at (6,4.) {};
              \node[fill=black, label=above:{$v_4$}, circle, inner sep=\radius] at (2,4.) {};
              \node[fill=black, label=below:{$v_1'$}, circle, inner sep=\radius] at (a) {};
          \end{tikzpicture}
      \end{center}
      \caption{A gadget $g^i$ according to \cref{lem:unadv_tr} for $H = C_4$ and $i \neq 1$. On the left after the presentation of the first copy of $H$ and on the right after an algorithm deleted $v_1$ and $v_1'$ was reinserted. The vertices $v_1$ and $v_1'$ are in the same equivalence class $V^i_1$.}
      \label{fig:reinsertionElse}
  \end{figure}
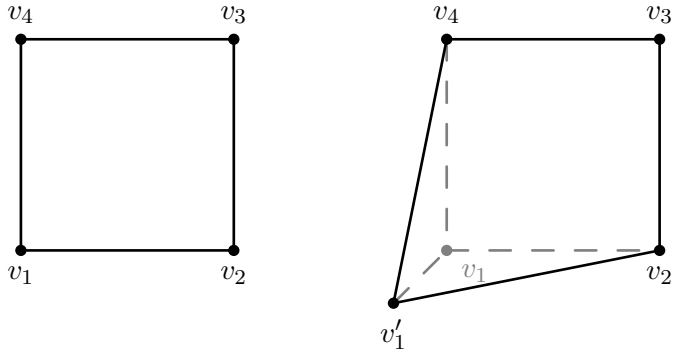
 
  The remaining part of the proof works similar to the proof of
  \cref{lem:unadv_cl}. In particular, since every reinsertion produces a
  new induced subgraph $H$ in $g^i$, any deterministic algorithm can be
  forced to delete at least $k$ vertices until $g^i$ is $H$-free for some
  gadget $g^i$. Since $V^i_i = \{ v_i \}$ holds, any subset of $k$
  connected vertices of $V^i\backslash \{ v_i \}$  must include at least
  two vertices $v$ and $v'$ from the same equivalence class, which are by
  construction adjacent and share the same closed neighborhood and are thus
  true twins. Because subgraph $H$ consists of $k$ connected vertices and
  does not include true twins, there cannot be an induced copy of $H$ in
  $g^i - \{ v_i \}$. Therefore, only one vertex deletion is required to get
  any gadget $g^i$ $H$-free and vertex $v_i$ is an optimal solution. 

  By combining $m$ such vertex-disjoint gadgets to a graph $G^m$, we force
  any deterministic algorithm $\textsc{Alg}$ to have
  $\text{cost}(\textsc{Alg}(G^m)) \ge mk$ and
  $\text{cost}(\textsc{Opt}(G^m)) = m$. It follows that there does not
  exist any deterministic online algorithm solving this problem with a
  competitive ratio better than $k$.
\end{proof}

Let us note some properties of these gadgets that directly follow from our
considerations and which we will further use in the proofs of
\cref{thm:boundLower,thm:boundLowerTight}.

\begin{remark} \label{rem:properties}
  Let $H$ be a connected subgraph that does not contain two true twins or does
  not contain two false twins. For any deterministic algorithm $\textsc{Alg}$
  that solves the \textsc{Delayed Connected $H$-Node-Deletion Problem} and is not
  arbitrarily bad, there exists a gadget $g^i$ described in \cref{lem:unadv_cl}
  (resp.\ \cref{lem:unadv_tr}) such that $\textsc{Alg}$ has to eventually
  delete $v_i$ in such a way that it has to delete all $k$ vertices of the copy
  of $H$ originally presented in $g^i$. Furthermore, every induced copy of $H$
  in $g^i$ must consist of exactly one vertex out of each of the $k$
  equivalence classes and during the execution of the algorithm $\textsc{Alg}$
  on $g^i$, there exists at most one undeleted vertex of each equivalence
  class. The adversary can choose $m > 0$ such vertex-disjoint gadgets that
  form an online graph $G$ and force at least $m \cdot k$ vertex deletions
  where $m$ deletions would suffice. 
\end{remark}

The above proofs of \cref{lem:unadv_cl,lem:unadv_tr} introduce two
ways to construct powerful gadgets $g^i$ which we continue to use in this
paper. Some common subgraphs $H$ which are covered by those lemmas and hence do
not yield an algorithm that is better than $k$-competitive for the
\textsc{Delayed Connected $H$-Node-Deletion Problem} are cliques $K_k$, induced
cycles $C_k$, induced stars $S_k$ or induced paths $P_k$. Also included are
triangle-free connected subgraphs $H$ with $H \ge 3$, i.e., subgraphs that do
not contain an induced triangle, because true twins form an induced triangle
with any of their neighbors and thus cannot exists in triangle-free connected
subgraphs with more than two vertices.

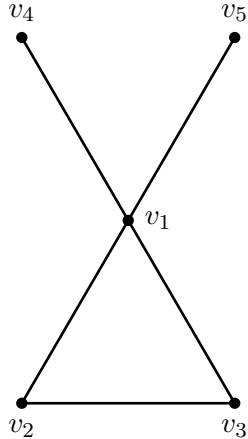
\begin{figure}[t]
  \begin{center}
  \definecolor{yqyqyq}{rgb}{0.5019607843137255,0.5019607843137255,0.5019607843137255}
  \begin{tikzpicture}[line cap=round,line join=round,x=0.7cm,y=0.7cm]
  		\def\radius{1.5pt};
        \draw[line width=1.pt] (0.,3.4641016151377544)-- (-2.,0.);
        \draw[line width=1.pt] (2.,0.)-- (-2.,0.);
        \draw[line width=1.pt] (0.,3.4641016151377544)-- (2.,0.);
        \draw[line width=1.pt] (-2.,6.928203230275509)-- (0.,3.4641016151377544);
        \draw[line width=1.pt] (0.,3.4641016151377544)-- (2.,6.928203230275509);
        \node[fill=black, label=below:{$v_2$}, circle, inner sep=\radius] at (-2,0) {};
        \node[fill=black, label=below:{$v_3$}, circle, inner sep=\radius] at (2,0) {};
        \node[fill=black, label=right:{$v_1$}, circle, inner sep=\radius] at (0.,3.4641016151377544) {};
        \node[fill=black, label=above:{$v_4$}, circle, inner sep=\radius] at (-2.,6.928203230275509) {};
        \node[fill=black, label=above:{$v_5$}, circle, inner sep=\radius] at  (2.,6.928203230275509) {};
  \end{tikzpicture}
  \end{center}
  \caption{A subgraph for which an algorithm with a better competitive ratio than $k=5$ might exist, that solves the \textsc{Delayed Connected $H$-Node-Deletion Problem} without advice.}
  \label{fig:windmill}
\end{figure}

Let us shortly investigate a connected subgraph $H$ which has both a pair of
true and a pair of false twins and might yield an algorithm with a competitive
ratio better than $k$ for $k=|H|$. Consider the subgraph $H$ with $k=5$
consisting of an induced triangle and two vertices with degree one which are
both adjacent to the same vertex of the triangle; see \cref{fig:windmill}. An
adversary could try to combine both kinds of gadgets to still force $k$ vertex
deletions where one optimal deletion suffices. But we can easily show that this
is not possible. To have one optimal vertex $v_o$ in a graph $G$, i.e.,
$G-{v_o}$ is $H$-free and $G$ is not, this vertex $v_o$ must be part of every
induced subgraph $H$ in $G$. Consider an algorithm that deletes one vertex for
every intact induced copy of $H$ that appears in $G$ and chooses, if possible,
the vertex to delete in such a way that it is part of all (or as many as
possible) induced subgraphs $H$ in $G$, even if they are already destroyed.
Such a vertex always exists if we limit $G$ to have one optimal vertex and the
algorithm therefore deletes only potentially optimal vertices. Thus, the copy
of $H$, which is first presented by the adversary, must be presented in such a
way, that every vertex of this subgraph $H$ could be the optimal, if the
adversary wants to enforce $k$ vertex deletions. Let us denote the vertices of
this copy by $v_1$, $v_2$ and $v_3$ as the vertices of the triangle, and $v_4$
and $v_5$ as the vertices with degree one, which are both adjacent to vertex
$v_1$; see \cref{fig:windmill}. If the algorithm now chooses to delete $v_4$,
the adversary must introduce a new intact copy of $H$ which includes the four
other undeleted vertices of the original copy to keep all of them potentially
optimal. There are two ways to do this.  To create a new induced copy of $H$, a
reinserted vertex $v_4'$ must be adjacent to $v_1$ and cannot be adjacent to
any other of the four undeleted vertices of the subgraph $H$. If $v_4'$ is also
adjacent to $v_4$, there exists a new induced subgraph $H$ formed by vertices
$v_4$, $v_4'$, $v_1$, $v_2$ and $v_5$. Because $v_3$ is not included in this
copy, it cannot be optimal and there are less than four undeleted potentially
optimal vertices left (in fact only two since the same happens to $v_2$). Thus,
the adversary cannot enforce a total of $k$ vertex deletions anymore. If
$v_4'$ is non-adjacent to $v_4$, there exists a new induced subgraph $H$ formed
by vertices $v_4$, $v_4'$, $v_1$, $v_2$ and $v_3$. Vertex $v_5$ is not included
and hence cannot be optimal. Note that we showed that there does not exist a
graph enforcing $k$ vertex deletion that has an optimal solution of size one.
This does not necessarily imply that there is an algorithm with a better
competitive ratio than $k$ for this $H$, but a gadget enforcing this might be
much more complex than the gadgets we considered.

\section{Lower Bounds with Predictions} \label{sec:results_w_advice}

We can apply the results we got in \cref{sec:results_wo_advice} to our model with predictions. For this, we inspect the competitive ratios of algorithms
solving the \textsc{Delayed Connected $H$-Node-Deletion Problem} with predictions on the gadgets used in \cref{lem:unadv_cl,lem:unadv_tr}.
Furthermore, we give a class of induced subgraphs $H$ on which the family of
algorithms $\textsc{Alg}_p$ (\cref{alg:Ap}) is Pareto-optimal in \cref{thm:boundLowerTight}.

First, we investigate whether a sophisticated online algorithm could
distinguish between correct and incorrect advice early on.

\begin{lemma}\label{lem:tr-untr} 
  Consider an online graph $G$ with $|G| = n$ and a sequence of untrusted
  advice queries $u_t$ for the \textsc{Delayed Connected $H$-Node-Deletion
  Problem} and some fixed $H$ with $|H| >1$. If for each induced copy of $H$
  in $G$ at least one vertex has advice $1$, then $G$ can be expanded to an
  online graph $G'$ such that $G'_n = G$ and the same advice queries $u_t$
  deliver advice $u_t(G')$ for $G'$ where $u_t(G')$ is the only correct
  advice for $G'$, the advice for the $n$ vertices of $G$ is equal to the
  advice for the first $n$ vertices of $G'$, and there are no additional
  vertices with advice $1$ added in $G'$, i.e., $u_t(G')= u_t(G)$ for $t \in
  [1,n]$ and $u_t(G')= 0$ for $t>n$.
\end{lemma}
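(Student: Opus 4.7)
The plan is to append, to each advice-$1$ vertex of $G$, a batch of $M$ private pendant copies of $H$ that make that vertex indispensable in every $H$-free deletion set of the extended graph. Concretely, set $S := \{v_t \in V(G) : u_t(G) = 1\}$. By hypothesis every induced $H$-copy of $G$ contains a vertex of $S$, so $G - S$ is already $H$-free. If $G$ itself is $H$-free, take $G' := G$; otherwise fix any anchor vertex $h \in V(H)$, pick an integer $M > |S|$, and, for each $v \in S$ and each $i \in \{1, \dots, M\}$, introduce $k-1$ fresh vertices together with precisely those edges that make them, jointly with $v$, an induced copy of $H$ in which $v$ plays the role of $h$. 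No other edges are added, so every fresh vertex is adjacent only to its anchor $v$ and to the remaining $k-2$ vertices of its own gadget. Reveal all fresh vertices after step $n$ (in any order) and assign each of them advice $0$; this yields $u_t(G') = u_t(G)$ for $t \le n$ and $u_t(G') = 0$ for $t > n$, as required.

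Next I would verify that $S$ is the unique minimum $H$-free deletion set of $G'$, which makes the claimed advice the only correct one. For feasibility, any induced $H$-copy of $G'$ that contains a fresh vertex $u$ anchored at $v$ must lie entirely within that single gadget, because $u$ has no neighbours outside it and $H$ is connected; in particular, the copy contains $v \in S$. Copies contained in $G$ hit $S$ by hypothesis. Thus $G' - S$ is $H$-free. For optimality and uniqueness, let $S' \subseteq V(G')$ be any $H$-free deletion set with $|S'| \le |S|$. If some $v \in S$ were absent from $S'$, then each of the $M$ gadgets anchored at $v$ would remain an intact induced $H$-copy and would force $S'$ to contain at least one of its $k-1$ fresh vertices; since the fresh vertices of distinct gadgets are disjoint, this gives $|S'| \ge M > |S|$, a contradiction. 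Hence $S \subseteq S'$, and equality follows from $|S'| \le |S|$.

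The delicate point is to ensure that the added gadgets do not introduce any \emph{new} induced $H$-copy that avoids $S$, for otherwise $S$ might fail to be feasible or fail to be uniquely optimal. This is exactly what the strict isolation of fresh vertices guarantees: by connectivity of $H$, an induced $H$-copy spanning two distinct gadgets, or combining a fresh vertex with a $G$-vertex other than its anchor, would need an edge or induced path between vertices that are in fact non-adjacent, which is impossible. Once this observation is in place, the rest of the proof reduces to the clean counting argument above with the choice $M > |S|$.
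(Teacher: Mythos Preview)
Your approach is essentially the paper's: attach private pendant copies of $H$ to every advice-$1$ vertex so that $S$ becomes the unique optimum. The paper uses exactly two pendants per vertex and argues uniqueness by a local exchange (omitting $v$ forces two deletions where one would do), whereas you use $M>|S|$ pendants and a global counting argument; both work.

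Two points need repair. First, the case ``$G$ is $H$-free, take $G':=G$'' is wrong whenever $S\neq\emptyset$: the optimal deletion set for an $H$-free graph is empty, so advice that marks any vertex with a $1$ is not correct for $G'=G$. Simply drop the case distinction and always attach the gadgets; if $S=\emptyset$ nothing is added and the all-zero advice is trivially the unique correct one.

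Second, the claim that an induced $H$-copy containing a fresh vertex ``must lie entirely within that single gadget'' is false. For $H=P_3$, if $b\in S$ has a $G$-neighbour $a$ and a fresh pendant $x$ adjacent only to $b$, then $\{x,b,a\}$ is an induced $P_3$ that mixes a fresh vertex with a $G$-vertex other than the anchor. What \emph{is} true, and is all you need, is that any such copy must contain the anchor $v$: after deleting $S$, every fresh vertex sits in a connected component consisting of at most the $k-1$ fresh vertices of its gadget, which cannot host a copy of the connected $k$-vertex graph $H$. Replace your ``entirely within the gadget'' step (and the last paragraph's justification) with this component-size argument and the proof goes through.
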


\begin{proof}
  Recall that the value of $u_t(G)$ delivers one bit of advice for each
  vertex $v_t$ of $G$ with $t \in [1,n]$. If $u_t(G) =1$, then the advice
  suggests that $v_t$ is part of a fixed optimal solution and should be
  deleted. Assume that every induced copy of $H$ in $G$ contains at least one
  vertex $v_t$ with advice $1$, i.e., $u_t(G) = 1$. Let $i$ denote the total
  number of vertices with advice $1$ in $G$ and $k = |H|$. We show that there
  exists a graph $G'$ with $G'_n = G$ and $|G'| = n + 2i (k-1)$ for which the
  advice $u_t(G')$ with $t \in [1,n + 2i (k-1)]$ is correct and encodes a
  fixed optimal solution. Recall that $G'_n$ is the graph induced by the
  first $n$ vertices that occur in $G'$. To construct $G'$ we expand $G$ by
  presenting $k-1$ new vertices with advice $0$ for each of the $i$ vertices
  with advice $1$ in $G$ so that they form an induced copy of $H$ together
  with the vertex with advice $1$ of $G$ and are disjoint to the rest of the
  graph. We do this twice such that each of the $i$ vertices with advice $1$
  is part of two otherwise disjoint induced copies of $H$ with $2(k-1)$ newly
  introduced vertices. The graph $G'$ consists of $G$ and these $2i (k-1)$
  new vertices. The advice queries $u_t$ deliver equal advice for the
  vertices of $G$ and $G'_n$ and advice $0$ for the newly added vertices of
  $G'$, i.e., $u_t(G')= u_t(G)$ for $t \in [1,n]$ and $u_t(G')= 0$ for $t>n$.
  It follows that there are at least $i > 0$ vertex-disjoint induced
  $H$-subgraphs in $G'$ and thus $\text{cost}(\textsc{Opt}(G')) \ge i$. The
  $i$ vertices with advice $1$ therefore form an optimal solution for $G'$,
  if deleting those makes $G'$ $H$-free. To show this, assume that there is
  an induced copy of $H$ in $G'$ after deleting all vertices with advice
  $1$ to show a contradiction. After the deletions, there cannot be an
  induced subgraph $H$ that contains any of the $2i (k-1)$ newly introduced
  vertices because each of those are, after the deletion of the vertices
  with advice $1$, in a connected component together with at most $k-2$
  other vertices. Therefore, there must be an induced subgraph $H$ formed
  by the original vertices of $G$ that does not contain any vertex with
  advice $1$. But there is by assumption no induced copy of $H$ in $G$ that
  contains no vertex with advice $1$, which leads to a contradiction.
  Hence, the $i$ vertices with advice $1$ form an optimal solution for $G'$
  and advice $u_t(G')$ is correct. To show that $u_t(G')$ is the only
  correct advice for $G'$ we show that the optimal solution formed by the
  vertices with advice $1$ is unique. For that, assume that there is an
  optimal solution of size $i$ which does not include some vertex $v$ with
  advice $1$. The $2(k-1)$ newly introduced vertices for $v$ form two
  otherwise disjoint induced subgraphs $H$ together with $v$. If $v$ is not
  part of an optimal solution, there must be two other vertices out of
  those $2(k-1)$ newly introduced vertices with advice $0$ that are part of
  this optimal solution. But deleting $v$ instead of those two vertices
  would suffice because the newly introduced vertices are not part of any
  induced subgraph $H$ after deleting $v$ as shown above. Therefore, we
  would get a smaller optimal solution which is a contradiction. This
  concludes the proof.
\end{proof}

\begin{theorem} \label{thm:boundLower}
  Let $H$ be a connected subgraph that does not contain two true twins or
  does not contain two false twins. Then, there does not exist any
  deterministic algorithm solving the \textsc{Delayed Connected
  $H$-Node-Deletion Problem} with predictions that has both a consistency
  of less than $k-1/(2-p) \cdot (k-1)$ and a robustness of less than
  $k+p/(1-p)$ for any $p \in [0,1)$ and $k = |H| > 1$.
\end{theorem}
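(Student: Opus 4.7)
Fix any deterministic algorithm $\textsc{Alg}$ with robustness $w$ and suppose for contradiction that its consistency satisfies $r < k-(k-1)/(2-p)$. The plan is to exhibit an incorrect-advice instance $I$ on which the robustness bound pins down $\textsc{Alg}$'s cost, and to extend $I$ via \cref{lem:tr-untr} to a correct-advice instance $I'$ on which the consistency bound applies. Because $\textsc{Alg}$'s behavior on the shared prefix of $I$ and $I'$ is identical, the two bounds combine into an inequality between $r$ and $w$ that produces the desired contradiction.

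The instance $I$ consists of $m$ vertex-disjoint gadgets $g_1,\dots,g_m$ of the kind used in \cref{lem:unadv_cl} (if $H$ lacks two false twins) or \cref{lem:unadv_tr} (if $H$ lacks two true twins). In each $g_j$, the initial copy of $H$ on $v_{j,1},\dots,v_{j,k}$ is revealed with $v_{j,1}$ receiving advice $1$ and every other initial vertex advice $0$; reinsertions in the equivalence class $V^{i_j}_1$ always get advice $1$, reinsertions in any other class always advice $0$. The adversary adaptively picks $i_j\in\{2,\dots,k\}$ so as to maximize the cost $C_j$ on $g_j$; by \cref{rem:properties}, the gadget becomes $H$-free precisely when $V^{i_j}_{i_j}=\{v_{j,i_j}\}$ becomes empty, so $C_j=\max_{c\ne 1} t^j_c$, where $t^j_c$ is the step at which $\textsc{Alg}$ first deletes the initial vertex $v_{j,c}$. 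Writing $e_j$ for the number of class-$1$ deletions that $\textsc{Alg}$ performs on $g_j$, the $C_j$ deletions include the $k-1$ distinct initial non-class-$1$ vertices plus those $e_j$ class-$1$ deletions, giving $C_j\ge k-1+e_j$. Since $\text{cost}(\textsc{Opt}(I))=m$, robustness yields $\sum_j C_j\le w m+\alpha_1$.

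By \cref{rem:properties} every induced copy of $H$ in $I$ uses one vertex from each equivalence class and therefore contains an advice-$1$ vertex from $V^{i_j}_1$, so the hypothesis of \cref{lem:tr-untr} is met. The lemma extends $I$ to an instance $I'$ with correct advice whose unique optimum consists of all advice-$1$ vertices; there are $m+\sum_j e_j$ of these (the initial $v_{j,1}$ plus $e_j$ reinsertions per gadget), hence $\text{cost}(\textsc{Opt}(I'))=m+\sum_j e_j$. Exactly $m$ advice-$1$ vertices are still undeleted at the end of the $I$-prefix (one surviving reinsertion per gadget), and each such vertex is the only shared vertex of two otherwise-disjoint induced copies of $H$ created by the extension; since a single deletion destroys at most those two copies, the extension forces at least $m$ further deletions. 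Combined with the fact that $\textsc{Alg}$ acts identically on the $I$-prefix of $I'$ as it does on $I$ alone, this gives $\text{cost}(\textsc{Alg}(I'))\ge \sum_j C_j+m$, and consistency then yields $\sum_j C_j+m\le r(m+\sum_j e_j)+\alpha_2$.

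Substituting $C_j\ge k-1+e_j$ into the consistency inequality, dividing by $m$ and letting $m\to\infty$ to absorb the constants, an elementary rearrangement produces $(k-1)+\frac{k-r}{r-1}\le w$, equivalently $r\ge\frac{w+1}{w+2-k}$. The right-hand side is strictly decreasing in $w$ for $k>1$, so substituting $w<k+\frac{p}{1-p}$ and simplifying gives $r>\frac{1+k(1-p)}{2-p}=k-\frac{k-1}{2-p}$, contradicting the hypothesis on $r$. The main subtlety I expect is the bookkeeping around $\sum_j e_j$: the advice scheme on class $V^{i_j}_1$ must be set up so that this single quantity simultaneously (i) lower-bounds $\sum_j C_j$ via the gadget structure on the robustness side and (ii) inflates $\text{cost}(\textsc{Opt}(I'))$ on the consistency side, and the argument has to arrange the signs so that the two estimates combine constructively rather than cancel.
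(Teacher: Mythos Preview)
Your proposal is correct and follows essentially the same approach as the paper: use the gadgets from \cref{lem:unadv_cl,lem:unadv_tr} with advice $1$ on equivalence class $1$, apply robustness on the disjoint-gadget instance and consistency on its \cref{lem:tr-untr}-extension, and combine the two bounds into $r \ge (w+1)/(w-k+2)$ before substituting $w = k + p/(1-p)$. The one cosmetic difference is that you restrict the adversary to $i_j \ne 1$, which pins the number of surviving advice-$1$ vertices at exactly $m$ rather than at most $m$; the paper leaves $i$ unrestricted and carries the slack $a \le m$ through the inequalities, but both routes land on the same trade-off curve.
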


\begin{proof}
  Consider an arbitrary algorithm $\textsc{Alg}$ solving the \textsc{Delayed
  Connected $H$-Node-Deletion Problem} with predictions that is not
  arbitrarily bad. An adversary constructs the following graph $G$ and advice
  $u_t(G)$ which gives bitwise advice for each vertex of $G$. Consider the
  gadget $g^i$ that we used in the proof of \cref{lem:unadv_cl} or, if subgraph
  $H$ does not contain true twins, of \cref{lem:unadv_tr} respectively. The
  adversary chooses advice for each vertex $v \in V^i$ for such $g^i$ by
  assigning advice bit $1$ to $v$ if $v \in V^i_1$ and advice $0$ to $v$
  otherwise. Recall that $V^i_j$ is the equivalence class such that $v \in
  V^i_j$ if $v=v_j$ or $v$ was reinserted for some $v'$ and $v' \in V^i_j$.
  From the properties listed in \cref{rem:properties}, exactly one vertex for
  each induced subgraph $H$ in $g^i$ has advice $1$ and there is always at most
  one undeleted vertex with advice $1$ in $g^i$. The properties of
  \cref{rem:properties} still hold on $\textsc{Alg}$ with such potentially
  incorrect advice because the advice could be easily constructed by an
  algorithm without advice since it does not reveal any additional information.
  In particular, the adversary can choose $m > 0$ such vertex-disjoint gadgets
  that form an online graph $G$ and force at least $m \cdot k$ vertex deletions
  where $m$ deletions would suffice.
  
  Advice $u_t(G)$ delivers advice for each vertex of each gadget as described
  above. According to \cref{lem:tr-untr}, the adversary can now expand online
  graph $G$ to an online graph $G'$ such that the graph $G'$ reveals $G$ first
  with the same advice for each vertex and has the advice $u_t(G')$ as unique
  correct advice. This is possible because for every disjoint gadget in $G$,
  there is a vertex with advice $1$ in each induced copy of $H$ as shown above.
  There are, by construction, no other induced copies of $H$ in $G$. Algorithm
  $\textsc{Alg}$ performs the same deletions on the vertices of $G$,
  independent of whether $G$ is expanded to $G'$ or not, because it operates on
  the same input until $G$ is fully revealed.

  We now analyze the performance of $\textsc{Alg}$ on $G$. Let $x$ denote the
  total number of vertex deletions algorithm $\textsc{Alg}$ performs on $G$ with
  advice $u_t(G)$. We already showed that $x \ge m \cdot k$. If $G$ is not
  expanded to $G'$, the advice $u_t(G)$ is potentially incorrect. Furthermore,
  we know that $\text{cost}(\textsc{Opt}(G)) = m$. Since $m$ can be chosen
  arbitrarily large, it follows that
  \[ w_{\textsc{Alg}} \ge \frac{\text{cost}(\textsc{Alg}^{u_t(G)}(G))}{\text{cost}(\textsc{Opt}(G))} = \frac{x}{m} \ge k\,. \]
  
  $\textsc{Alg}$ deletes during its execution on $G$ with advice
  $u_t(G)$ for each of the $m$ presented gadgets at least $k-1$ vertices with
  advice $0$, as shown above, and therefore in total at least $m(k-1)$ vertices
  with advice $0$. We denote the number of deleted vertices with advice $1$ by
  $i$, which is bounded by the total number of vertex deletions $x$ minus the
  number of deleted vertices with advice $0$. Therefore, $i \le x - m(k-1)$.
  Furthermore, we showed above that there is at any point of the execution of
  algorithm $\textsc{Alg}$ on $G$ with advice $u_t(G)$ at most one undeleted
  vertex with advice $1$ for each gadget. Hence, there are at most $m$
  undeleted vertices with advice $1$ in $G$ after the execution of
  $\textsc{Alg}$ on $G$. We denote the number of undeleted vertices with advice
  $1$ in $G$ after the execution of $\textsc{Alg}$ as $a \le m$. In the
  construction of $G'$ no additional vertices with advice $1$ are introduced as
  shown in \cref{lem:tr-untr}. The total number of vertices with advice $1$ in
  $G'$ with $u_t(G')$ is therefore $i+a$. Since the advice $u_t(G')$ is correct
  by construction of $G'$, the vertices with advice $1$ decode an optimal
  solution. Thus, $\text{cost}(\textsc{Opt}(G')) = i+a$. In \cref{lem:tr-untr},
  $G'$ is constructed by introducing disjoint subgraphs $H$ for each vertex
  with advice $1$. In particular, those do not consist of any other vertex of
  the original graph $G$. Therefore, algorithm $\textsc{Alg}$ has to delete at
  least one additional vertex for each of the $a$ undeleted vertices with
  advice $1$ to get $G'$ $H$-free. Algorithm $\textsc{Alg}$ thus deletes a
  total number of at least $x+a$ vertices during its execution on $G'$ with
  $u_t(G')$. Hence,
  \[ \frac {\text{cost}(\textsc{Alg}^{u_t(G')}(G'))}{\text{cost}(\textsc{Opt}(G'))} \ge \frac{x+a}{i+a} \ge \frac{x+a}{x - m(k-1) +a}\,. \]
  We know that $k>1$, $m w_{\textsc{Alg}} \ge x \ge mk$, $m >0$
  and $m \ge a \ge 0$. It follows that
  \[ m w_{\textsc{Alg}} + m \ge x + a\, , \]
  which implies    
  \[ \frac{m (k-1)}{x-m (k-1)+a} + 1 \ge \frac{k-1}{w_{\textsc{Alg}}-k+2} +1 \]
  and therefore
  \[ \frac{x+a}{x- m(k-1)+a}  \ge \frac{w_{\textsc{Alg}} + 1}{w_{\textsc{Alg}}-k+2}  \,.\]

  Since advice $u_t(G')$ is the only correct advice on $G'$ and $m$ can be
  chosen arbitrarily large, it follows that
  \[ r_{\textsc{Alg}} \ge \frac{\text{cost}(\textsc{Alg}^{u_t(G')} (G'))}{\text{cost}(\textsc{Opt}(G'))} \ge \frac{w_{\textsc{Alg}} + 1}{w_{\textsc{Alg}}-k+2} \,. \]
  Since $w_{\textsc{Alg}} \ge k$ we can substitute $w_{\textsc{Alg}} = p/(1-p)+k$ with $p \in [0,1)$. We get
  \[
    r_{\textsc{Alg}} \ge \frac{w_{\textsc{Alg}} + 1}{w_{\textsc{Alg}}-k+2} = \frac{\frac{p}{1-p}+k + 1}{\frac{p}{1-p}+k-k+2} = \frac{k(2-p) - k +1}{2-p} = k  - \frac{1}{2-p} \cdot (k-1) \,,
  \]
  which concludes the proof.
\end{proof}

Note that the lower bound proven in \cref{thm:boundLower} does not match the
upper bound given by the family of algorithms $\textsc{Alg}_p$
(\cref{alg:Ap}). We later investigate subgraphs $H$ for which a better
algorithm is possible to exist. But for subgraphs $H$ where we can combine
the used gadgets such that the number of leftover vertices with advice $1$ is
constant, we are able to prove that the family of algorithms $\textsc{Alg}_p$
is Pareto-optimal. This is the case for those subgraphs $H$ which are
additionally at least $2$-vertex-connected.

\begin{theorem} \label{thm:boundLowerTight}
  Let $H$ be a connected subgraph that does not contain two true twins or
  does not contain two false twins. If $H$ is $2$-vertex-connected, then
  there does not exist any deterministic algorithm solving the
  \textsc{Delayed Connected $H$-Node-Deletion Problem} with predictions
  that has both a consistency of less than $k-p \cdot (k-1)$ and a robustness
  of less than $k + p/(1-p)$ for any $p \in [0,1)$ and $k = |H| > 1$.
\end{theorem}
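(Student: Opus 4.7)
The plan is to follow the structure of the proof of Theorem~\ref{thm:boundLower} while sharpening the adversarial construction so that the number of undeleted advice-$1$ vertices after the algorithm's execution on $G$, denoted $a$ in that proof, is bounded by an absolute constant rather than by $m$. This is the one place where the argument in Theorem~\ref{thm:boundLower} loses a factor that costs the tight bound.

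To see why controlling $a$ is the right quantity to attack, recall that the proof of Theorem~\ref{thm:boundLower} establishes
\[
r_{\textsc{Alg}} \;\ge\; \frac{x + a}{x - m(k-1) + a}\,,
\]
where $x$ is the algorithm's cost on $G$ and $m = \text{cost}(\textsc{Opt}(G))$. Writing $w = x/m$ and assuming $a$ is bounded by a constant independent of $m$, taking $m \to \infty$ yields the cleaner limiting bound $r_{\textsc{Alg}} \ge w_{\textsc{Alg}}/(w_{\textsc{Alg}} - k + 1)$. Substituting $w_{\textsc{Alg}} = k + p/(1-p)$ and using $w_{\textsc{Alg}} - k + 1 = 1/(1-p)$ gives $r_{\textsc{Alg}} \ge (1-p)\bigl(k + p/(1-p)\bigr) = k - p(k-1)$, which is precisely the claimed inequality.

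The construction itself modifies the $m$-gadget instance from Theorem~\ref{thm:boundLower} by combining the sub-gadgets from Lemma~\ref{lem:unadv_cl} or Lemma~\ref{lem:unadv_tr} so that their advice-$1$ equivalence classes $V^{i_j}_1$ are identified into a single shared class $V^{*}_1$. Since under the reinsertion rule each equivalence class has at most one alive member at any time, the number of undeleted advice-$1$ vertices across the whole of $G$ then stays bounded by a constant independently of $m$. The $2$-vertex-connectivity of $H$ is what makes the identification sound: because $H - v$ is connected for every $v \in V(H)$, each sub-gadget retains enough internal structure after the shared advice-$1$ vertex is removed that the reinsertion-based forcing argument of Lemma~\ref{lem:unadv_cl} (respectively Lemma~\ref{lem:unadv_tr}) still forces $k$ deletions in that sub-gadget, and the sub-gadget's own optimum vertex $v_{i_j}$ must still be deleted separately, preserving $\text{cost}(\textsc{Opt}(G)) = \Omega(m)$ and hence $x \ge mk$ as before.

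Once the modified instance is in place, the remainder proceeds exactly as in Theorem~\ref{thm:boundLower}: the premise of Lemma~\ref{lem:tr-untr} holds because every induced copy of $H$ appearing in $G$ still contains the unique alive advice-$1$ vertex at that moment, so $G$ can be expanded to $G'$ with the same advice prefix made correct, and the counting now carries $a \le c$ instead of $a \le m$, yielding the claimed bound $r_{\textsc{Alg}} \ge k - p(k-1)$. The main technical obstacle is the combinatorial verification that the merging at the shared vertex does not introduce any spurious induced copies of $H$ spanning several sub-gadgets, which would either lower $\text{cost}(\textsc{Opt}(G))$ below $\Omega(m)$ or break the hypothesis of Lemma~\ref{lem:tr-untr}; I expect this check to require separate treatment of the no-true-twins and no-false-twins cases, in each of which the $2$-vertex-connectedness of $H$ is used to rule out the problematic cross-gadget configurations.
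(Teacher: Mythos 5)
Your proposal follows essentially the same route as the paper: the paper also keeps the count of undeleted advice-$1$ vertices at most one by chaining the gadgets, reusing the single surviving advice-$1$ vertex of gadget $g^{i_j}$ as the first vertex of the copy of $H$ presented in $g^{i_{j+1}}$ (which amounts to your identification of the classes $V^{i_j}_1$), and then reruns the accounting of \cref{thm:boundLower} with $a\le 1$ to get $r_{\textsc{Alg}}\ge w_{\textsc{Alg}}/(w_{\textsc{Alg}}-k+1)=k-p(k-1)$. The one obstacle you flag but do not discharge --- that the merging creates no induced copy of $H$ spanning two gadgets --- is exactly where the paper invests its extra work, and it is resolved uniformly (no case split between the twin conditions is needed): consecutive gadgets meet in a single vertex $v$, so any connected induced copy of $H$ using non-shared vertices from both gadgets would have $v$ as a cut vertex, contradicting $2$-vertex-connectivity. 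Note also that your other stated use of $2$-connectivity (that $H-v$ connected keeps the within-gadget forcing alive) is not where the hypothesis is actually needed; the forcing inside each gadget works for any connected twin-free $H$, and $2$-connectivity enters only through the cross-gadget argument above.
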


\begin{proof}
  We will expand the idea used in the proof of \cref{thm:boundLower} by
  constructing an instance that keeps the number of undeleted vertices with
  advice $1$ constant. An adversary constructs the following online graph $G$
  with advice $u_t(G)$ for an arbitrary algorithm $\textsc{Alg}$ that solves
  the \textsc{Delayed Connected $H$-Node-Deletion Problem} for a forbidden
  subgraph $H$ and is not arbitrarily bad. Recall the gadgets $g^i$ from the
  proof of \cref{lem:unadv_cl} for subgraphs $H$ without two false twins or of
  \cref{lem:unadv_tr} for subgraphs $H$ without two true twins. The adversary
  designs advice for each vertex $v \in V^i$ of such a gadget again by
  assigning advice $1$ to vertex $v$ if $v \in V^i_1$ and advice $0$ otherwise.
  Recall that $V^i_j$ is the equivalence class such that $v \in V^i_j$ if
  $v=v_j$ or $v$ was reinserted for some $v'$ and $v' \in V^i_j$. We already
  showed that the properties of \cref{rem:properties} hold for any algorithm
  with such advice in the proof of \cref{thm:boundLower}. It follows that there
  is exactly one vertex with advice $1$ in each induced copy of $H$ in $g^i$
  and there is at most one undeleted vertex with advice $1$ in $g^i$.
  
  The adversary constructs $G$ by choosing $m>0$ such gadgets
  $g^{i_1},\dots,g^{i_m}$ and combining them in the following way. The first
  gadget $g^{i_1}$ is presented normally. If there is no undeleted vertex with
  advice $1$ after the execution of algorithm $\textsc{Alg}$ on gadget
  $g^{i_j}$ for $j \in [1,m-1]$, the next gadget $g^{i_{j+1}}$ is presented
  disjoint from the rest of the graph. Otherwise, there is one undeleted vertex
  $v$ with advice $1$ in gadget $g^{i_j}$ after the execution of the algorithm
  on it. The next gadget $g^{i_{j+1}}$ will now use vertex $v$ as the first
  vertex of the presented copy of $H$. Note that this vertex would receive
  advice $1$ by the construction of the advice anyways. If $v$ is deleted
  during the construction of $g^{i_{j+1}}$, the vertex which is reinserted for
  $v$ only shares edges with the other vertices in $g^{i_{j+1}}$ and does not
  copy the edges between $v$ and any previous vertices of gadget $g^{i_j}$. All
  other presented vertices in $g^{i_{j+1}}$ are disjoint to the rest of the
  graph. This ensures that there is always at most one undeleted vertex with
  advice $1$ during the execution of $\textsc{Alg}$ on $G$ because every time
  an undeleted vertex with advice $1$ remains after the execution on one
  gadget, it is reused for the next gadget. 
  
  It is clear that each gadget, even when combined in this manner, is still
  able to enforce that at least all $k$ vertices of the copy of $H$, that is
  originally presented, have to be deleted by algorithm $\textsc{Alg}$, if
  there are no induced copies of $H$ between different gadgets, which we prove
  later.  Thus, $\textsc{Alg}$ deletes a total number of $x \ge m \cdot k$
  vertices on $G$ with advice $u_t(G)$. At least $m(k-1)$ of those deleted
  vertices have advice $0$ and thus, at most $x-m(k-1)$ of those have advice
  $1$. We now show that, with the given properties of subgraph $H$, only $m$
  vertex deletions are required to make the graph $G$ $H$-free, i.e.,
  $\text{cost}(\textsc{Opt}(G))=m$. It is easy to see that
  $\text{cost}(\textsc{Opt}(G)) \ge m$. For this consider the copy of $H$ that
  is originally presented in each of the $m$ gadgets. Because the gadget
  enforces that all of its $k$ vertices are deleted, any leftover undeleted
  vertex, which is used in the next gadget, cannot be part of it. Since all
  gadgets are otherwise disjoint, those $m$ copies of $H$ are also disjoint and
  hence require at least $m$ vertex deletions. Now, suppose that
  $\text{cost}(\textsc{Opt}(G)) > m$ to show a contradiction. We know that, by
  construction of each gadget from \cref{lem:unadv_cl} (resp.
  \cref{lem:unadv_tr}), one vertex deletion suffices to delete all induced
  copies of $H$ in this gadget. If $\text{cost}(\textsc{Opt}(G)) > m$, then
  there must be a remaining copy of $H$ in graph $G$ after the deletion of
  those in total $m$ vertices, which are optimal for each gadget. Thus, this
  remaining copy of $H$ must include vertices from at least two different
  gadgets, which are not part of both gadgets. We know that subgraph $H$ must
  be connected. Because the gadgets are only connected through at most one
  vertex $v$ (the reused vertex with advice $1$), this remaining copy of $H$
  must include vertex $v$ and at least one other vertex from each of the two
  different gadgets, since $v$ is part of both gadgets. But since $v$ is the
  only connection between those two gadgets, deleting $v$ would lead to those
  other vertices, which are part of the copy of $H$, being unconnected. This is
  a direct contradiction to subgraph $H$ being 2-vertex-connected.  Hence, it
  follows that $\text{cost}(\textsc{Opt}(G)) = m$ and because $m$ can be chosen
  arbitrarily large, $w_{\textsc{Alg}} \ge \frac{x}{m} \ge k$.
  
  The remaining part of the proof works similar to the previous proof of
  \cref{thm:boundLower}. Because there are no copies of $H$ spanning over
  multiple gadgets, there is a vertex with advice $1$ in each copy of $H$ in
  the graph $G$ and we can expand the online graph $G$ to an online graph $G'$
  such that $u_t(G')$ is a unique correct advice according to
  \cref{lem:tr-untr}. Algorithm $\textsc{Alg}$ cannot detect if $G$ is expanded
  to $G'$ or not and will therefore perform the same until all vertices of
  graph $G$ are fully revealed. The total number of by $\textsc{Alg}$ deleted
  vertices with advice $1$, which we denote by $i$, on $G$ with advice $u_t(G)$
  is again bounded by $i \le x-m(k-1)$ as shown above. We also showed that the
  number of undeleted vertices with advice $1$ is at most one. If $G$ is
  expanded to $G'$, no additional vertices with advice $1$ are added and, since
  the advice $u_t(G')$ is correct, the vertices with advice $1$ encode an
  optimal solution. Hence, $\text{cost}(\textsc{Opt}(G')) \le i+1$. We again
  know that by the construction of $G'$ in \cref{lem:tr-untr}, algorithm
  $\textsc{Alg}$ has to delete at least one additional vertex to get $G'$
  $H$-free because of the copies of $H$ that are newly introduced for the
  undeleted vertex with advice $1$ in $G'$. Thus, $\text{cost}(
  \textsc{Alg}^{u_t(G')}(G') \ge x+1$. And hence,
  \[ \frac{\text{cost}(\textsc{Alg}^{u_t(G')} (G'))}{\text{cost}(\textsc{Opt}(G'))} \ge \frac{x+1}{i+1} \ge \frac{x+1}{x-m(k-1)+1}\,. \]
  We know that $k>1$, $mw_{\textsc{Alg}} \ge x \ge mk$ and $m>0$. It follows that 
  $m w_{\textsc{Alg}} + 1 \ge x + 1$, which implies
  \[ \frac{m (k-1)}{x-m (k-1)+1} \ge \frac{m (k-1)}{m w_{\textsc{Alg}}-m(k-1)+1} = \frac{k-1}{w_{\textsc{Alg}}-k+1+\frac{1}{m}} \]
  and thus
  \[ \frac{x+1}{x- m(k-1)+1} \ge \frac{w_{\textsc{Alg}}+\frac{1}{m}}{w_{\textsc{Alg}}-k+1+\frac{1}{m}}  \ge  \frac{w_{\textsc{Alg}}}{w_{\textsc{Alg}}-k+1+\frac{1}{m}} \,.\]

  Since advice $u_t(G')$ is the only correct advice on $G'$ and $m$ can be
  chosen arbitrarily large, it follows that
  \[ r_{\textsc{Alg}} \ge \frac{\text{cost}(\textsc{Alg}^{u_t(G')} (G'))}{\text{cost}(\textsc{Opt}(G'))} \ge \frac{w_{\textsc{Alg}}}{w_{\textsc{Alg}}-k+1+\frac{1}{m}}\ge \frac{w_{\textsc{Alg}}}{w_{\textsc{Alg}}-k+1}\,. \]
  We can now substitute $w_{\textsc{Alg}} = p/(1-p)+k$ with $p \in [0,1)$ since
  $w_{\textsc{Alg}} \ge k$, yielding
  \[
      r_{\textsc{Alg}} \ge \frac{w_{\textsc{Alg}}}{w_{\textsc{Alg}}-k+1} = \frac{\frac{p}{1-p}+k}{\frac{p}{1-p}+k-k+1} = \frac{p+k (1-p)}{p + 1-p} = k-p \cdot (k-1)\,.
  \]
\end{proof}
  
Let us now take a closer look at a simple family of graphs that do not satisfy
the conditions of \cref{thm:boundLowerTight}, since they are not $2$-vertex
connected: the paths $P_k$ of constant length $k$. First, we investigate why
the graph $G$ constructed in the proof of \cref{thm:boundLowerTight} fails to
enforce the matching bounds for $H=P_3$.

Consider the graph $G$ built in \cref{thm:boundLowerTight} for $H=P_3$. It is
clear that each gadget is still able to enforce $k$ vertex deletions, but when
combining the gadgets to $G$, they lose the property that they only require one
optimal vertex deletion per gadget. This happens because new induced copies of
$P_3$ appear between the combined gadgets as is shown in \cref{fig:paths}.
Hence, combining $m$ such gadgets leads to $\text{cost}(\textsc{Opt}(G))>m$
which is incompatible with the key part of the proof. Nevertheless, this does
not happen for $H=P_2$ and we are able to change the way we combine the gadgets
such that we can also show a matching lower bound for $H = P_k$ for $k \ge 5$.
Therefore, the family of algorithms $\textsc{Alg}_p$ (\cref{alg:Ap}) proves to be
Pareto-optimal for those forbidden subgraphs, too. 

It is easy to see that the proof of \cref{thm:boundLowerTight} directly applies
for $H=P_2$ because induced copies of a forbidden subgraph $H$ can only appear
between the gadgets if $H$ consists of at least three vertices. This is
because it would need at least one vertex in the old gadget, one in the new
gadget, and the shared vertex.
  
\begin{figure}[t]
  \begin{center}
      \definecolor{yqyqyq}{rgb}{0.5019607843137255,0.5019607843137255,0.5019607843137255}
      \begin{tikzpicture}[line cap=round,line join=round,x=0.7cm,y=0.7cm]
        \def\radius{1.5pt};
        \draw[line width=1.pt,dash pattern=on 8pt off 8pt,color=yqyqyq] (-4.,4.)-- (-4.,0.);
        \draw[line width=1.pt,dash pattern=on 8pt off 8pt,color=yqyqyq] (-4.,0.)-- (0.,0.);
        \draw[line width=1.pt,dash pattern=on 8pt off 8pt,color=yqyqyq] (0.,0.)-- (4.,0.);
        \draw[line width=1.pt,dash pattern=on 8pt off 8pt,color=yqyqyq] (0.,4.)-- (0.,0.);
        \draw[line width=1.pt,dash pattern=on 8pt off 8pt,color=yqyqyq] (0.,4.)-- (-4.,0.);
        \draw[line width=1.pt,dash pattern=on 8pt off 8pt,color=yqyqyq] (0.,4.)-- (4.,0.);
        \draw[line width=1.pt,dash pattern=on 8pt off 8pt,color=yqyqyq] (-4.,4.)-- (0.,0.);
        \draw[line width=1.pt] (-4.,4.)-- (0.,4.);
        \draw[line width=1.pt] (-4.,4.)-- (-8.,4.);
        \draw[line width=1.pt] (-8.,4.)-- (-12.,4.);
        \node[color=yqyqyq, fill, label={[text=yqyqyq]below:{$v_1^1,1$}}, circle, inner sep=\radius] at (-4,0) {};
        \node[color=yqyqyq, fill, label={[text=yqyqyq]below:{$v_2^1,0$}}, circle, inner sep=\radius] at (0,0) {};
        \node[color=yqyqyq, fill, label={[text=yqyqyq]below:{$v_3^1,0$}}, circle, inner sep=\radius] at (4,0) {};
        \node[fill, label=above:{${v'}^1_2,0$}, circle, inner sep=\radius] at (0.,4.) {};
         \node[fill, label=above:{${v'}^{1,2}_1,1$}, circle, inner sep=\radius] at (-4.,4.) {};
        \node[fill, label=above:{$v^2_2,0$}, circle, inner sep=\radius] at  (-8.,4.) {};
        \node[fill, label=above:{$v^2_3,0$}, circle, inner sep=\radius] at (-12.,4.) {};
    \end{tikzpicture}
  \end{center}
  \caption{The combination of two gadgets $g^{i_1}$ and $g^{i_2}$ during a possible execution of an algorithm according to \cref{thm:boundLowerTight} for $H=P_3$. The advice of each vertex is given after its name and the corresponding gadget is given in the superscript of the name. Vertex ${v'}^{1,2}_1$ is part of both gadgets. Grey vertices indicate that they were deleted by the algorithm and the incident edges of deleted vertices are displayed as dashed. There are clearly induced paths of length three between both gadgets, e.g.\ $v^2_2, {v'}^{1,2}_1, v^1_1$.}
  \label{fig:paths}
\end{figure}
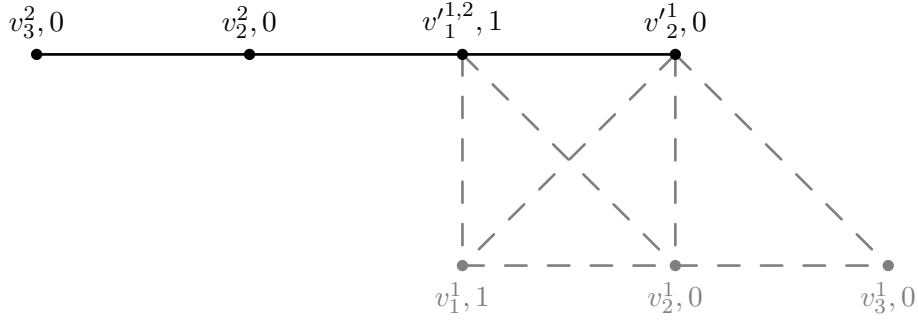

\begin{lemma}\label{lemma:paths}
  For $H = P_k$ and $k \ge 5$, there does not exist any deterministic algorithm
  solving the $\textsc{Delayed $H$-Node-Deletion Problem}$ that has both a
  consistency of less than $k-p \cdot (k-1)$ and a robustness of less than
  $k+p/(1-p)$ for any $p\in [0,1)$.
\end{lemma}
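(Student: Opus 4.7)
My plan is to mirror the proof of \cref{thm:boundLowerTight} and replace only its combination step by one tailored to paths. As in that proof, I would use the gadgets $g^i$ from \cref{lem:unadv_tr} (which apply since $P_k$ has no true twins for $k\ge 3$) and assign advice~$1$ to every vertex of the equivalence class $V^i_1$. Each such gadget forces at least $k$ vertex deletions, contains at least one vertex with advice~$1$ in every induced copy of $P_k$ it creates, and leaves at most one undeleted advice-$1$ vertex after the algorithm has made it $P_k$-free.

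The difficulty is how to combine $m$ such gadgets into a single online graph $G$ so that (a)~each gadget still enforces $k$ deletions, (b)~$\text{cost}(\textsc{Opt}(G))=m$, and (c)~the number of undeleted advice-$1$ vertices stays bounded by a constant. In \cref{thm:boundLowerTight} this is achieved by reusing the leftover advice-$1$ vertex $\hat v_j$ of $g^{i_j}$ as the first vertex of $g^{i_{j+1}}$, and $2$-vertex-connectedness of $H$ is used to rule out an optimum cheaper than $m$. For $P_k$, the shared vertex is a cut vertex, so both (b) and the absence of cross-gadget induced $P_k$'s break down under the straightforward sharing, as \cref{fig:paths} already illustrates for $P_3$.

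For $k\ge 5$ I would modify the combination as follows. Gadget $g^{i_{j+1}}$ is presented as in \cref{lem:unadv_tr}, using $\hat v_j$ as the $v_1$ of its initial $P_k$ and with the convention of \cref{thm:boundLowerTight} that a reinsertion replacing $\hat v_j$ only inherits edges inside $g^{i_{j+1}}$. In addition, every vertex presented in $g^{i_{j+1}}$ that becomes adjacent to a representative of class $V^{i_{j+1}}_1$ is also declared adjacent to the remaining leftover vertices of $g^{i_j}$. These \emph{blocker} edges introduce chords that destroy every induced $P_k$ crossing the boundary between the two gadgets, and the same rule is applied to any later reinsertion presented in $g^{i_{j+1}}$. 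Because $k\ge 5$, the path is long enough to place the blockers so that no single vertex lies in every newly created induced $P_k$ (which would let the algorithm violate (a) with a single deletion) and so that the optimum still needs one deletion per gadget to cover all induced $P_k$'s.

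Granted these properties, the rest of the argument goes through exactly as in \cref{thm:boundLowerTight}: \cref{lem:tr-untr} yields an extension $G'$ on which the constructed advice is the only correct one, and the same arithmetic then gives $w_{\textsc{Alg}}\ge k+p/(1-p)$ and $r_{\textsc{Alg}}\ge k-p(k-1)$. The hard part is the combinatorial verification of the blocker configuration --- concretely, checking that every cross-gadget induced $P_k$ receives a chord, that no common vertex breaks them all at once, and that no shortcut reduces the optimum below $m$. For $k\in\{3,4\}$ the path is too short for any choice of blockers to meet all three requirements simultaneously, which is precisely why the lemma is restricted to $k\ge 5$.
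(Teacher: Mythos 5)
Your high-level architecture matches the paper's: take the gadgets of \cref{lem:unadv_tr} with advice $1$ on the class $V^i_1$, chain $m$ of them by reusing the leftover advice-$1$ vertex as the first vertex of the next gadget, then feed the result into \cref{lem:tr-untr} and rerun the arithmetic of \cref{thm:boundLowerTight}. However, the entire content of this lemma is the one step you leave open: which cross-gadget edges to add so that (i) no induced $P_k$ crosses a gadget boundary, (ii) each gadget still forces $k$ deletions, and (iii) $\text{cost}(\textsc{Opt}(G))=m$. Your blocker rule --- join only those vertices of $g^{i_{j+1}}$ that are adjacent to a $V^{i_{j+1}}_1$-representative to the leftover vertices of $g^{i_j}$ --- demonstrably fails (i). Let $a$ be a leftover vertex of $g^{i_j}$ and let $v_1v_2\dots v_k$ be the initial path of $g^{i_{j+1}}$ with $v_1=\hat v_j$; your rule makes $a$ adjacent to $v_2$ but to none of $v_3,\dots,v_k$, so $a,v_2,v_3,\dots,v_k$ is an induced copy of $P_k$ crossing the boundary and avoiding the shared vertex. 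Whether such copies preserve (ii) and (iii) is precisely the unresolved combinatorial question, and you yourself flag that you have not carried out the verification; so the proposal does not yet constitute a proof. (The claim that $k\in\{3,4\}$ fails ``for any choice of blockers'' is likewise asserted rather than argued, though the lemma does not need it.)

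The paper sidesteps all of this with a blunter construction: every vertex of $g^{i_j}$ is made adjacent to \emph{every} vertex not in $g^{i_j}$, i.e., distinct gadgets are completely joined (the reused vertex counting as belonging to both). Within-gadget induced subgraphs are untouched, so the forcing property is preserved, and a short argument kills all cross-gadget induced copies of $P_k$ for $k\ge 5$: if such a copy met two gadgets $x$ and $y$ in vertices belonging to only one of them, any vertex from a third gadget would close a triangle with those two; two further vertices in the same gadget would give the opposite endpoint degree $3$ inside the path; and one further vertex in each gadget yields an induced $C_4$. Since at most one vertex lies in both gadgets, $k\ge 5$ guarantees enough vertices to force one of these contradictions. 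This complete-join argument is the missing piece; if you replace your selective blockers by it (and verify the triangle/degree/$C_4$ case analysis), the rest of your outline goes through as you describe.
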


\begin{proof}
  The key part of the proof is to find a way to combine the gadgets $g^i$
  (see \cref{lem:unadv_tr}) such that $\text{cost}(\textsc{Opt}(G))=m$
  without losing the desired properties of each gadget and while keeping
  the number of undeleted vertices with advice $1$ constant. Again, the
  adversary is able to do this by preventing the occurrence of induced
  copies of $P_k$ between the gadgets while still reusing the undeleted
  vertices with advice $1$ for $k \ge 5$. The adversary constructs advice
  for each vertex $v$ of a gadget by assigning advice $1$ to $v$ if $v \in
    V^i_1$ and advice $0$ otherwise, as seen before.
    
  The adversary combines $m > 0$ such gadgets $g^{i_1},\dots,g^{i_m}$ to form
  graph $G$ as follows. If there is no undeleted vertex with advice $1$ after
  the execution of algorithm $\textsc{Alg}$ on gadget $g^{i_j}$ for $1 \le j
  <m$, then the next gadget $g^{i_{j+1}}$ is presented vertex-disjoint from
  the rest of the graph. Otherwise, there is an undeleted vertex $v$ with
  advice $1$ in the gadget $g^{i_j}$ after the execution of the algorithm
  $\textsc{Alg}$ on gadget $g^{i_j}$ and gadget $g^{i_{j+1}}$ is introduced
  in the following way. Gadget  $g^{i_{j+1}}$ reuses vertex $v$ with advice
  $1$ as its first vertex of the presented copy of $P_k$. The gadgets are
  tweaked such that every vertex of gadget $g^{i_{j}}$ shares an edge with
  every vertex which is not part of gadget $g^{i_{j}}$. Note that the reused
  vertex $v$ is both part of $g^{i_j}$ and $g^{i_{j+1}}$ and that for any two
  different gadgets, there is at most one reused vertex which is part of both
  gadgets. The edges between different gadgets do not change the property
  that any online algorithm $\textsc{Alg}$, that is not arbitrarily bad, has
  to delete at least $k$ vertices, $k-1$ of those with advice $0$, during its
  execution on each gadget, if there are no induced copies of $P_k$ between
  the gadgets. We now prove this. Suppose there is an induced copy of $P_k$
  in $G$ that includes at least two vertices that are part of different
  gadgets and not part of the same gadget. Let us denote those vertices by
  vertex $v^x_1$, which is part of the gadget $x$, and vertex $v^y_1$, which
  is part of the gadget $y$. Furthermore, we denote the set of $k \ge 5$
  vertices that form this induced copy of $P_k$ by $V_P$.  Because $v^x_1 \in
  V_P$ and $v^y_1 \in V_P$ are not part of the same gadget, there is an edge
  between $v^x_1$ and $v^y_1$. There cannot be a vertex in $V_P$ which is not
  part of gadget $x$ or gadget $y$ because it would share an edge with
  $v^x_1$ and $v^y_1$ and hence form a triangle with those, which cannot be
  part of any induced $P_k$. Since there is at most one vertex that is part
  of both gadgets $x$ and $y$, there must be two other vertices in $V_P$
  which are part of gadgets $x$ and $y$ but not part of both. Suppose those
  two vertices are part of the same gadget. Without loss of generality, let
  them be part of gadget $x$. Therefore, they both share an edge with $v^y_1$
  which leads to vertex $v^y_1$ having a degree of at least three in the
  subgraph induced by $V_P$, which is not possible since that subgraph should
  be a path of length $k$. Thus, there must be a vertex $v^y_2 \in V_P$,
  which is part of gadget $y$ and not part of gadget $x$, and a vertex $v^x_2
  \in V_P$, which is part of gadget $x$ and not part of gadget $y$. Thus,
  there is a cycle of length $4$ in the subgraph induced by $V_P$, namely
  $v^x_1, v^y_1, v^x_2, v^y_2, v^x_1$ which is again a direct contradiction.
  Thus, there is no induced copy of $P_k$ between two different gadgets in
  $G$ and deleting the optimal vertex of each of the $m$ gadgets makes $G$
  $P_k$-free. Hence, $\text{cost}(\textsc{Opt}(G))=m$. The rest of the proof
  works analogously to the proof of \cref{thm:boundLowerTight}. 
\end{proof}

In the proof of \Cref{lemma:paths} we were able to combine the gadgets to graph $G$ in such a
way that no induced copies of $H$ occur between different gadgets. Thus,
$\text{cost}(\textsc{Opt}(G))=m$. There also might be another way to combine
gadgets to a graph $G$ with $\text{cost}(\textsc{Opt}(G))=m$, where there are
induced copies of $H$ between different gadgets, but each of those copies
includes at least one optimal vertex of a gadget. Nevertheless, this is not
possible for $H=P_3$ which we can show as following under some assumptions.

Consider a gadget $g^i$ with a leftover, undeleted vertex $v$ with advice $1$
after the execution of some algorithm. The adversary now wants to use this
vertex $v$ for another gadget by introducing a new $P_3$ which includes
vertex $v$, to keep the number of undeleted vertices with advice $1$
constant. It seems unavoidable for the gadget $g^i$ that $v$ is adjacent to
at least one vertex $w$ of gadget $g^i$ that is not optimal for $g^i$ but
will be deleted by some algorithm. This is the case because if a certain
algorithm deletes the vertex with advice $1$ first for a freshly presented
copy of $P_3$, to create a new intact copy, the vertex $v$ with advice $1$
has to be introduced such that it is adjacent to at least another vertex $w$.
The algorithm now chooses to not delete this vertex $v$ and to delete the
other vertex $w$ instead. Since the algorithm has deleted only two vertices,
vertex $w$ should not be optimal for the gadget because, if it were, the
gadget could not force another vertex deletion without introducing a second
optimal vertex. This behavior of an algorithm does not seem exploitable with
the gadgets we previously introduced. We now show that if we create a new
gadget that includes $v$ by introducing a new induced path $P_3$ with
vertices $v$, $v_1$ and $v_2$, there is always another induced path between a
vertex of this new path and vertex $w$, thus creating an induced path between
different gadgets. Note that $w$ cannot be part of the new gadget because it
is already deleted. Let us first consider the case where $v$ is the middle
vertex of the induced path $v_1,v,v_2$  and thus shares an edge with $v_1$
and $v_2$. If there is no edge between $v_1$ and $w$ or $v_2$ and $w$,
vertices $w,v,v_1$ or $w,v,v_2$ form an induced path of length three. If
there is an edge $v_1,w$ and $v_2,w$, vertices $v_1,w,v_2$ form an induced
path of length three. If vertex $v$ is an outer vertex of the path, we fix
$v_1$ to be the middle vertex. We already showed that there always is an
induced path between the gadgets if there is an edge between $v_1$ (the
middle vertex) and $w$. If there is no edge $(v_1,w)$, vertices  $w,v,v_1$
form an induced path of length three. We just showed that the adversary could
not avoid the occurrence of an induced path between different gadgets. It is
also easy to see that this path does not necessarily include an optimal
vertex of one gadget. Vertex $w$ is not optimal and while one out of $v_1$,
$v_2$ or $v$ will be optimal for the next gadget, an algorithm can choose to
delete the vertices included in the path with $w$ first and thus force those
vertices to be not optimal. This would force the combination of those two
gadgets to have an optimal solution greater than two.

Therefore, the subgraph $H= P_3$ is another candidate (along with the one we
saw in \cref{fig:windmill}) for which the family of algorithms $\textsc{Alg}_p$
(\cref{alg:Ap}) may not be optimal.

\section{Conclusion}

We presented a family of algorithms that turns out to be Pareto-optimal for
many subgraphs $H$ on the $\textsc{Delayed $H$-Node-Deletion Problem}$ under the
model of algorithms with predictions. This family of algorithms can only be Pareto-optimal
on subgraphs for which the na\"{\i}ve $k$-competitive algorithm is optimal for the
problem without advice. Thus, we also showed this property for many forbidden
subgraphs $H$ by using suitable adversarially constructed gadgets $g^i$.
Additionally, we proposed a subgraph $H$ on which those gadgets are not able to
enforce a competitive ratio of at least $k$ for every algorithm without advice,
indicating that a better algorithm might exist both with and without untrusted
advice. Furthermore, we showed that there might be a better family of
algorithms with advice for the forbidden subgraph $P_3$, even though the na\"{\i}ve
algorithm is optimal for it in the model without advice.

Further work is required to investigate those proposed subgraphs and to find an
even more general rule determining whether the given algorithms are optimal for
a forbidden subgraph $H$. Future research might also examine if these results
also hold for other forms of predictions, for example for advice of any
form and length. One might also consider the problem of non-induced forbidden
subgraphs.

\section*{Acknowledgments}
The authors would like to thank Fabian Frei, Matthias Gehnen, and Peter Rossmanith for inspiring this paper by proposing the algorithm $\textsc{Alg}_p$ for the special case $H=P_2$, corresponding to the \textsc{Delayed Online Vertex Cover Problem}.

\small

\end{document}